\newtheorem{proposition}{Proposition}
\begin{document}

\begin{frontmatter}

\title{Efficient Resource Allocation in 5G Massive MIMO-NOMA Networks: Comparative Analysis of SINR-Aware Power Allocation and Spatial Correlation-Based Clustering}

\author[unilim]{Samar Chebbi\corref{cor1}}
\ead{samar.chebbi@unilim.fr}

\author[uca]{Oussama Habachi}
\ead{oussama.habachi@uca.fr}

\author[unilim]{Jean-Pierre Cances}
\ead{jean-pierre.cances@xlim.fr}

\author[unilim]{Vahid Meghdadi}
\ead{vahid.meghdadi@unilim.fr}

\author[teluq]{Essaid Sabir\fnref{also}}
\ead{essaid.sabir@teluq.ca}

\cortext[cor1]{Corresponding author.}

\address[unilim]{XLIM, University of Limoges, 87060 Limoges, France.}
\address[uca]{LIMOS, University of Clermont Auvergne, 63178 Aubiere, France.}
\address[teluq]{TÉLUQ, University of Quebec Montreal, Montreal, Canada.}

\begin{abstract}
With the evolution of 5G networks, optimizing resource allocation has become crucial to meeting the increasing demand for massive connectivity and high throughput. Combining Non-Orthogonal Multiple Access (NOMA) and massive Multi-Input Multi-Output (MIMO) enhances spectral efficiency, power efficiency, and device connectivity. However, deploying MIMO-NOMA in dense networks poses challenges in managing interference and optimizing power allocation while ensuring that the Signal-to-Interference-plus-Noise Ratio (SINR) meets required thresholds.
Unlike previous studies that analyze user clustering and power allocation techniques under simplified assumptions, this work provides a comparative evaluation of multiple clustering and allocation strategies under identical spatially correlated network conditions. We focus on maximizing the number of served users under a given Quality of Service (QoS) constraint rather than the conventional sum-rate maximization approach. Additionally, we consider spatial correlation in user grouping, a factor often overlooked despite its importance in mitigating intra-cluster interference.
We evaluate clustering algorithms, including user pairing, random clustering, Correlation Iterative Clustering Algorithm (CIA), K-means++-based User Clustering (KUC), and Grey Wolf Optimizer-based clustering (GWO), in a downlink spatially correlated MIMO-NOMA environment. Numerical results demonstrate that the GWO-based clustering algorithm achieves superior energy efficiency while maintaining scalability, whereas CIA effectively maximizes the number of served users. These findings provide valuable insights for designing MIMO-NOMA systems that optimize resource allocation in next-generation wireless networks.
\end{abstract}

\begin{keyword}
AI \sep Grey Wolf Optimizer \sep User Clustering \sep MIMO-NOMA \sep Resource Allocation \sep System Capacity \sep Energy and Spectral Efficiency
\end{keyword}

\end{frontmatter}

\section{Introduction}
The advent of 5G marks a paradigm shift in wireless communications, promising faster data speeds, improved reliability, and massive connectivity for billions of devices. However, the growing demand for high throughput and massive access presents significant challenges to existing network architectures. Non-Orthogonal Multiple Access (NOMA) and massive Multi-Input Multi-Output (MIMO) are key technologies that significantly improve spectral efficiency, power utilization, and network capacity. However, deploying MIMO-NOMA in dense wireless networks requires effective resource allocation strategies such as user clustering and power allocation, to manage interference and maintain optimal performance.
Unlike conventional approaches that prioritize sum-rate maximization, our focus is on maximizing the number of served users while ensuring a given Quality of Service (QoS) constraint. Furthermore, despite its crucial role in mitigating intra-cluster interference, spatial correlation among users has been largely overlooked in previous user grouping strategies.
While extensive research has explored various user clustering and power allocation methods for MIMO-NOMA systems, most studies fail to provide a unified comparative analysis of clustering algorithms under consistent spatially correlated network conditions. 
This study bridges these gaps by systematically evaluating multiple user clustering algorithms—including user pairing, random clustering, the Correlation Iterative Clustering Algorithm (CIA), K-means++-based User Clustering (KUC), and Grey Wolf Optimizer-based clustering (GWO)—within a common downlink MIMO-NOMA framework that explicitly accounts for spatial correlation. 
The key contributions of this paper are summarized as follows:
\begin{itemize}
\item We conducted an extensive comparative evaluation of multiple user clustering algorithms for MIMO-NOMA, analyzing their trade-offs in interference management, power efficiency, and scalability while ensuring compliance with SINR thresholds.
\item We assessed these algorithms based on critical performance metrics, including the number of users served, power efficiency, and scalability, providing insights into their effectiveness in spatially correlated environments.
\item To enhance network reliability and mitigate inter-cluster interference, we employ advanced beamforming techniques, such as Zero-Forcing, ensuring robust communication in dense user environments.
\end{itemize}

The remainder of this paper is organized as follows. Section II reviews the background and related work. Section III details the proposed power allocation strategy, while Section IV discusses user clustering and power control in MIMO-NOMA systems. Section V presents a computational complexity analysis, followed by the performance evaluation in Section VI. Section VII discusses challenges and limitations, while Section VIII outlines future research directions. Finally, Section IX concludes the paper with a summary of our findings and their implications for next-generation wireless networks.

%\hfill mds 
%\hfill August 26, 2015

\section{Background and Related Work}

Despite the evolution of 5G technology, it is still expected to meet the growing demand for Internet connectivity and bandwidth challenges. Current research directions investigate the combination of innovative technologies such as NOMA and massive MIMO, which are crucial to increasing spectral efficiency. For example, Benjebbour et al. and Alberto et al. outlined, in \cite{mimo_noma_2} and \cite{mimo_noma_1}, the potential of combining NOMA and MIMO, highlighting their crucial roles in addressing exponential growth in traffic and the number of connected devices. 

Various user clustering approaches have been proposed to improve the performance of MIMO-NOMA systems. As a baseline approach, we consider user pairing, which is widely considered to enhance system performance with low computational complexity. One notable approach is Near-Far user pairing, which pairs users based on their channel gains relative to the base station. This strategy involves pairing a user with a high channel gain (close to the base station) with a user with a low channel gain (far from the base station), leveraging the channel gain disparities to maximize spectral efficiency.
Ding et al. \cite{up} explored the theoretical foundations of Near-Far user pairing in the context of downlink NOMA transmissions and proved that this strategy significantly improves spectral efficiency by reducing interference and improving the signal-to-interference-plus-noise ratio (SINR) for far users. Based on this, Zhang et al. \cite{up_sic} proposed a user pairing based on channel gain. This method improves system performance but suffers from high computational complexity, making it challenging for practical implementation. On the other hand, El-ghorab et al. \cite{ee_up} focused on energy-efficient user pairing for downlink NOMA in massive MIMO networks. Their proposed algorithms reduce the overall power consumption while maintaining high levels of fairness among users.

Recent contributions highlighted the importance of spatial correlation in improving the effectiveness of beamforming and interference mitigation by leveraging the benefits of channel correlation and gain disparities.
In fact, they have integrated spatial correlation into user clustering strategies to reduce inter-user interference and improve the efficiency of NOMA-aided MIMO systems. For instance, Kiani et al. \cite{sc_1} proposed a learning-based user clustering algorithm that focuses on minimizing power consumption while ensuring the transmission quality required by the massive MIMO system. Integrating spatial correlation in their method has significantly enhanced the network's ability to manage interference, improving overall system capacity and energy efficiency. Meanwhile  \cite{adaptive_noma} introduced an adaptive user clustering approach for uplink NOMA-based IoT networks, where users are grouped according to spatial correlation to ensure that users within the same cluster have less correlated channel conditions to improve NOMA decoding efficiency. This approach improves the signal separation process at the receiver and reduces interference among clustered users, making it particularly suitable for uplink IoT scenarios.
Moreover, Wang et al. \cite{sc_2} explored the outage probability in a Simultaneously Transmitting And Reflecting Reconfigurable Intelligent Surface (STAR-RIS) assisted NOMA network operating over spatially correlated channels. Their research underscores the importance of spatial correlation in designing and optimizing NOMA systems. Considering the spatial characteristics of the channels, they achieved a significant improvement in the outage probability.
On the other hand, Shafie et al. \cite{sc_3} investigated power allocation strategies for a High-Altitude Platform Station (HAPS)-enabled MIMO-NOMA system with spatially correlated channels. Their work emphasizes the importance of utilizing customized power allocation techniques that are based on spatial correlation, which is essential to maintain system performance in such complex environments.
Furthermore, in \cite{Dhakal2020LineOfSight}, Dhakal et al. provided insights on how line-of-sight components and spatial correlation impact NOMA performance, opening new avenues for research on NOMA beamforming and clustering. They investigated how semi-orthogonal users and highly correlated users within a beam can maximize spectral efficiency and reduce interference of NOMA systems. Their comparative analysis of Rician and Rayleigh channels provides valuable perspectives on the intricacies of clustering in correlated channels. Despite their contributions, the study acknowledges the challenges posed by static clustering criteria and the need for methods that could adapt to dynamic user distributions and mobility patterns.

Machine learning techniques, especially K-means, are crucial to improving resource allocation strategies within next-generation network systems. For example, the work of Cabrera et al. \cite{cabrera2018enhanced} emphasizes the use of the K-means algorithm to optimize the sum rate of the MIMO-NOMA system while ensuring fairness among users. Chu et al. \cite{chu2023sum} focused on improving the efficiency of downlink communication in high-mobility scenarios using orthogonal time frequency space (OTFS) modulation with NOMA. The K-means algorithm partitioned the base station coverage area, ensuring that users are close to the cluster center, thus reducing mutual interference through strategic power allocation. 
Shukla et al. \cite{shukla2022kmeans++} introduced the K-means++ algorithm to address the computational challenge by providing a more refined initialization process, which has been shown to converge more rapidly to optimal solutions. On the other hand, Authors in \cite{rdn_interference} proposed a DRL-based framework that jointly manages interference and dynamically adjusts power allocation to improve spectral efficiency. By leveraging deep Q-learning and policy gradient techniques, the approach continuously adapts to changing network conditions. The study demonstrated significant improvements in spectral efficiency and interference mitigation, particularly in high-density environments, making it distinct from traditional heuristic-based methods.

While various studies have explored different user clustering and power allocation strategies, a major gap remains in providing a comprehensive comparative analysis of these methods under diverse network conditions, particularly in spatially correlated environments. Our research specifically aims to provide a comparative evaluation of our previously developed user clustering and power allocation techniques within a common network framework, specifically under the same spatially correlated network conditions.

\section{System Model and Preliminaries}

We consider a downlink MIMO system based on the NOMA architecture, where the Base Station (BS), equipped with $M$ Radio Frequency (RF) chains, serves $N$ single-antenna receivers.  To maximize multiplexing gains, the number of user clusters is set equal to the number of RF chains, ensuring efficient resource allocation and interference management.  
To accurately model the MIMO-NOMA wireless channel, we adopt an advanced millimeter-wave (mmWave) representation based on the extended Saleh-Valenzuela model, which effectively characterizes the propagation dynamics of mmWave communications. A defining feature of the extended Saleh-Valenzuela model is its cluster-based propagation structure, where mmWave signals travel through multiple clusters of rays. Each cluster contains several rays with slight variations in their angles of arrival and departure, introducing spatial correlation effects that significantly impact interference and beamforming performance. The angular spread of these rays is modeled using a Laplacian distribution to accurately reflect real-world mmWave propagation characteristics. 
By incorporating the transmit and receive antenna gains along with normalized array response vectors, this model effectively captures the interplay between beamforming, interference alignment, and power allocation. As a result, it provides a robust framework for evaluating the performance of MIMO-NOMA systems in dense wireless environments.  

For user $i \in \{1, \ldots, N\}$ and cluster $k \in \{1, \ldots, M\}$, the channel matrix $\mathbf{H}_{i,k} \in \mathbb{C}^{N \times M}$ is expressed as:

\begin{equation}
    \mathbf{H}_{i,k} = \gamma \sum_{i=1}^{M} \sum_{k=1}^{N} \alpha_{i,k} \mathbf{G}_{r,t}(\theta^{r,t}_{i,k}, \phi^{r,t}_{i,k}) \mathbf{a}_{r,t}(\theta^{r,t}_{i,k}, \phi^{r,t}_{i,k})
\end{equation}

where $\gamma$ is a normalization factor, $\alpha_{i,k}$ is the complex gain of the $k^{th}$ ray in the $i^{th} $scattering cluster, and $\phi^{r}$ and $\phi^{t}$ are, respectively, the azimuth and elevation angles of arrival and departure. The function $\mathbf{G}_{r,t}(\phi^{r,t}_{i,k}, \theta^{r,t}_{i,k})$ represents the combined gain of the transmit and receive antennas at the corresponding angles, and $\mathbf{a}_{r,t}(\theta^{r,t}_{i,k}, \phi^{r,t}_{i,k})$ denotes the normalized response vectors of the receiver and transmitter at these angles.

\subsection{Zero-Forcing Beamforming}  

In the considered massive MIMO-NOMA downlink system, each cluster is served with a single transmission beam. Massive MIMO enables simultaneous transmission of multiple data streams through different antennas, while NOMA allows multiple users within the same cluster to share spectral resources by assigning different power levels. Users within each cluster are selected based on high channel correlation, ensuring that their channel responses exhibit strong similarity due to common propagation effects such as scattering, reflection, and fading. Among them, the user with the highest correlation to the rest of the cluster members is designated as the representative user, responsible for generating the Zero-Forcing (ZF) precoding matrix. The ZF beamforming matrix $\mathbf{W}_{i,k}$ is designed to suppress inter-cluster interference by directing the transmission beam precisely towards the representative user of cluster $k$. It is computed as the pseudo-inverse of the channel matrix $\mathbf{H}_{i,k}$, ensuring that the transmitted signals are orthogonal to the interference space:

\begin{equation}
\mathbf{W}_{i,k} = \mathbf{H}_{i,k}^{\dagger} = \left( \mathbf{H}_{i,k}^{H} \mathbf{H}_{i,k} \right)^{-1} \mathbf{H}_{i,k}^{H}
\end{equation}

where $\mathbf{H}_{i,k}^{\dagger}$ represents the Moore-Penrose pseudo-inverse, $\mathbf{H}_{i,k}^{H}$ is the Hermitian transpose, and $\mathbf{H}_{i,k}$ is the channel matrix corresponding to the representative user. While ZF effectively eliminates interference for the representative user, the spatial correlation among cluster members naturally reduces intra-cluster interference for other users, as their signal paths exhibit similar propagation characteristics. This spatial correlation improves the efficiency of Successive Interference Cancellation (SIC) by ensuring more uniform interference levels within each cluster, thereby enhancing decoding performance. Additionally, cluster formation is designed to minimize collinearity between different groups, further reducing inter-cluster interference.  

To prevent excessive power consumption and maintain stable beamforming performance, the ZF precoding matrix is normalized using its Frobenius norm:

\begin{equation}
\mathbf{W}_{i,k}^{\text{norm}} = \frac{\mathbf{W}_{i,k}}{\| \mathbf{W}_{i,k} \|_F}
\end{equation}

where

\begin{equation}
\| \mathbf{W}_{i,k} \|_F = \sqrt{\sum_{i=1}^{M} \sum_{j=1}^{N} |\mathbf{W}(i,j)|^2}
\end{equation}

ensuring that the beamforming power remains within operational constraints. By leveraging ZF beamforming alongside correlation-based user clustering, the proposed system maximizes the number of served users while effectively mitigating inter-cluster interference, leading to improved overall performance in dense MIMO-NOMA environments.

\subsection{ Spatial correlation}
\label{spatial correlation}
In a mMIMO-NOMA network, spatial correlation is crucial for user clustering since it directly affects how effectively users can be grouped to minimize interference and optimize resource allocation. 
Spatial correlation refers to the degree of similarity in signal characteristics experienced by users who experience similar environmental conditions. High spatial correlation occurs when the channel responses of different users are very similar, typically because the users are close to each other or share similar propagation environments, resulting in similar fading and interference patterns. However, a low spatial correlation occurs when users are spread out in different locations or experience different propagation paths, resulting in different channel behavior. This orthogonality guarantees that the signals are less likely to interfere with each other, even when they share the same frequency band or time slot. Consequently, separating and managing signals from different users becomes much easier, thereby reducing interference and improving overall system performance.
The spatial correlation between the channels of users $i$ and $j$ can be expressed as:
\begin{equation}
\label{eq:coll}
C_{ij} = \frac{E[\mathbf{h}_i^H \mathbf{h}_j]}{\sqrt{E[|\mathbf{h}_i|^2] E[|\mathbf{h}_j|^2]}},
\end{equation}
where $\mathbf{h}_i$ and $\mathbf{h}_j$ are the channel vectors of users $i$ and $j$, respectively. When the channels are highly correlated (i.e., \( C_{ij} \) is close to 1),  the signals experience similar environmental effects, such as scattering, reflection, and fading. On the other hand, a low correlation (that is, \( C_{ij} \) is close to 0) indicates that the channels are more orthogonal, meaning that the signals are less likely to interfere with each other, which is advantageous for system performance.

%%%%%%%%%%%%%%%%%%%%%%%%%%%%%%%%%%%%%%%%%%%%%%%%%%%%%%%
\section{Power Allocation Algorithm}

In this section, we formulate the power allocation problem with the objective of maximizing the number of users whose SINR exceeds the required threshold \( \gamma_{\text{th}} \), ensuring reliable communication under total power constraints. Unlike conventional approaches that primarily focus on maximizing energy efficiency or overall system throughput, our method prioritizes serving the maximum number of users while maintaining their Quality of Service (QoS) requirements.

To achieve this, power is allocated efficiently among users within each cluster, avoiding excessive distribution to users with poor channel conditions, which could lead to overall performance degradation. Instead, resources are allocated optimally to users who can sustain reliable communication, ensuring an improved balance between spectral efficiency, fairness, and system capacity. 
Based on these considerations, the power allocation problem is formulated as follows:

\begin{equation}
\label{eq: power_system_cluster}
\mathbf{P}^{*} =
\left\{
    \begin{aligned}
        & \underset{\mathbf{P}}{\arg \max} \sum_{k=1}^{M} \sum_{j \in \zeta_k} \mathcal{U}(P_{j,k}) \\
        & \text{subject to:} \\
        & \quad \quad \sum_{k=1}^{M} \sum_{j \in \zeta_k} P_{j,k} \leq P_{\text{max}}, \\
        & \quad \quad \gamma_j \geq \gamma_{\text{th}}, \quad \forall j \in \zeta_k, \, \forall k \in \{1, 2, \dots, M\}.
    \end{aligned}
\right.
\end{equation}
where \( \mathcal{U}(P_{j,k}) \) is the utility function:
\begin{equation}
\mathcal{U}(P_{j,k}) =
\begin{cases} 
1, & \text{if } \gamma_j \geq \gamma_{\text{th}} \\
0, & \text{if } \gamma_j < \gamma_{\text{th}} 
\end{cases}
\end{equation}

The SINR of user \( j \), denoted as \( \gamma_j \), is expressed as:

\begin{equation}
\small
\gamma_j = \frac{C_{j} P_{j,k}|\mathbf{H}_{j,k} \mathbf{W}_{j,k}^{\text{H}}|^{2}}{\sigma^{2} + C_{j} |\mathbf{H}_{i,k} \mathbf{W}_{j,k}^{\text{H}}|^{2} \sum_{i \in \zeta_k, i < j} P_{i,k} + (1-C_{j})|\mathbf{H}_{j,k} \mathbf{W}_{j,k}^{\text{H}}|^{2} \sum_{\substack{l=i+1 \\ m \not \in \zeta_k}}^{N} P_{l,m}} 
\end{equation}

In this expression \( C_{j} \) represents the collinearity coefficient between the user \( j \) in the cluster \( k \) and the representative user of the cluster, \( p_{j,k} \) is the power coefficient allocated to the user \( j \) within the cluster \( k \), \( \mathbf{H}_{j,k} \) is the channel matrix and \( \mathbf{W}_{j,k} \) is the beamforming vector related to the user \( j \), while \( \sigma^{2} \) denotes the noise variance.

\begin{proposition}
In order to meet the power requirements of all clusters, the total allocated power should be bounded by $P_{\text{Max}}$ and $P_{\text{min}}$, 
where \( P_{\text{min}} \) represents the minimum power required to meet the SINR thresholds and QoS requirements in all clusters. \( P_{\text{min}} \) is illustrated in Equation~(\ref{eq: closed_form_pmin}) if user have the same QoS requirements and in Equation~(\ref{eq: closed_form_pmin_variable}) otherwise.
\end{proposition}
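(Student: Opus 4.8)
The plan is to construct $P_{\text{min}}$ explicitly by observing that the minimum total power is attained when every SINR constraint is active, i.e., when $\gamma_j = \gamma_{\text{th}}$ for each served user. First I would rearrange the SINR expression for user $j$ so that the per-user power $P_{j,k}$ is isolated, rewriting $\gamma_j \geq \gamma_{\text{th}}$ as the linear inequality

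\begin{equation}
C_{j}\,|\mathbf{H}_{j,k}\mathbf{W}_{j,k}^{H}|^{2}\,P_{j,k} \;\geq\; \gamma_{\text{th}}\Big(\sigma^{2} + I^{\text{intra}}_{j} + I^{\text{inter}}_{j}\Big),
\end{equation}

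where $I^{\text{intra}}_{j}$ and $I^{\text{inter}}_{j}$ collect, respectively, the intra-cluster and inter-cluster interference terms in the denominator of $\gamma_j$. Since allocating more power than strictly necessary only shrinks the remaining budget, the power-minimizing allocation saturates every inequality, so I would replace them all with equalities.

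The second step exploits the SIC decoding order within each cluster. Because the intra-cluster interference $I^{\text{intra}}_{j}$ involves only users $i<j$ that are decoded earlier, the equalities form a triangular system inside a cluster and can be solved by forward substitution: the first user's power depends only on noise and inter-cluster terms, and each subsequent power is expressed through the already-determined ones. Unrolling this recursion yields a closed expression for every $P_{j,k}$. When all users share the common threshold $\gamma_{\text{th}}$, the recursion collapses into a geometric accumulation and produces the compact form of Equation~(\ref{eq: closed_form_pmin}); carrying distinct thresholds $\gamma_{\text{th},j}$ through the identical recursion gives the more general Equation~(\ref{eq: closed_form_pmin_variable}). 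Summing the resulting per-user powers over all clusters defines $P_{\text{min}}$, and the physical transmit budget $P_{\text{Max}}$ supplies the upper bound, so feasibility amounts to $P_{\text{min}}\leq P_{\text{Max}}$.

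The hard part will be the inter-cluster coupling. Unlike the intra-cluster terms, $I^{\text{inter}}_{j}$ links powers belonging to different clusters, so the equalities do not split into independent per-cluster recursions but assemble into a coupled linear system, which I would write in fixed-point form as $\mathbf{P} = \mathbf{F}\mathbf{P} + \mathbf{u}$, with the coupling matrix $\mathbf{F}$ carrying the $(1-C_j)$-weighted inter-cluster channel gains and $\mathbf{u}$ the noise contribution. The crux is to show that a nonnegative solution exists and is unique, i.e.\ that $(\mathbf{I}-\mathbf{F})$ is invertible with a componentwise nonnegative inverse; by Perron--Frobenius / standard-interference-function theory this reduces to verifying that the spectral radius of $\mathbf{F}$ is below one, which the $(1-C_j)$ weighting should guarantee whenever intra-cluster correlations are high. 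Establishing this spectral-radius bound is the main effort; the subsequent summation and the specialization to equal versus variable QoS are then routine.
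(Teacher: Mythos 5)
Your overall strategy is sound and its first two steps coincide exactly with the paper's Appendix~A: the paper likewise rearranges $\gamma_n \geq \gamma_{\text{th}}$ into the linear lower bound of Equation~(\ref{eq:pn}), exploits the SIC decoding order to substitute forward through the triangular intra-cluster system, and collapses the resulting double sums into the geometric factors $(1+\gamma_{\text{th}})^{n-i-1}$ (and the analogous product terms for user-specific thresholds). Where you genuinely diverge is the inter-cluster coupling. The paper never forms a full $N\times N$ fixed-point system $\mathbf{P}=\mathbf{F}\mathbf{P}+\mathbf{u}$; instead it aggregates each cluster's power into a scalar $A_k=\sum_{i=1}^{n_k}P_{i,k}$, writes the coupled constraints as $A_i \geq \alpha_i + \beta_i A_{m\neq k_i}$, and then uses a purely algebraic manipulation---adding $\beta_i A_i$ to both sides so the right-hand side involves only the \emph{total} power $\sum_k A_k$, dividing by $(1+\beta_i)$, and summing over clusters---to solve a single scalar inequality, which is what produces the specific closed forms~(\ref{eq: closed_form_pmin}) and~(\ref{eq: closed_form_pmin_variable}) with denominator $1-\sum_k \beta_k/(1+\beta_k)$. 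This buys directness: the coupling is resolved in three lines because it enters only through per-cluster aggregates, and the invertibility question you identify as the main effort reduces to positivity of one scalar denominator. Your Perron--Frobenius route is more general and more rigorous about existence and uniqueness of a nonnegative solution, but it is heavier than needed here, and it would not automatically deliver the stated closed forms---you would still need a Sherman--Morrison-type computation exploiting the aggregate (effectively rank-one) structure of $\mathbf{F}$ to recover them. One point in your favor: the feasibility condition you insist on proving (spectral radius of $\mathbf{F}$ below one, equivalently $\sum_k \beta_k/(1+\beta_k)<1$) is exactly what the paper's derivation tacitly assumes when it divides by the denominator in Equation~(\ref{eq: system}), and the paper never verifies it; your hand-wave that the $(1-C_j)$ weighting ``should guarantee'' it is no proof either, but at least your framework makes the assumption explicit rather than silent.
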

\begin{proof}
see Appendix~\ref{A1}
\end{proof}

\begin{figure*}[!htbp]
\begin{equation}
\label{eq: closed_form_pmin}
    \begin{aligned}    
    P_{\text{min}}=\frac
    {\sum_{k=1}^{M} \left(\frac{\sum_{i=1}^{n_{k}} \frac{ \gamma_{\text{th}} \sigma^{2}}{\mathbf{C}_{i} |\mathbf{H}_{i,k} \mathbf{W}_{i,k}^{\text{H}}|^{2}} + \gamma_{\text{th}}^{2} \sigma^{2} \sum_{i=1}^{n_k} \sum_{j=1}^{i-1} \frac{(1+\gamma_{\text{th}})^{i-j-1}}{\mathbf{C}_{j} |\mathbf{H}_{j} \mathbf{W}_{j}^{\text{H}}|^{2}}}{1+ \sum_{i=1}^{n_{k}} \gamma_{\text{th}} \frac{(1-\mathbf{C}_{i})}{\mathbf{C}_{i}} \sum_{m \not \in \zeta_k} \sum_{\substack{l=n_k+1 \\ m \not \in \zeta_k}}^{N} P_{l,m}  + \gamma_{\text{th}}^{2} \sigma^{2} \sum_{i=1}^{n_i} \sum_{j=1}^{i-1} (1+\gamma_{\text{th}})^{i-j-1}\frac{(1-\mathbf{C}_{j})}{\mathbf{C}_{j}} \sum_{m \not \in \zeta_k} \sum_{\substack{l=n_k+1 \\ m \not \in \zeta_k}}^{N} P_{l,m} }\right)} 
    { 1- \sum_{k=1}^{M} \left(\frac{\sum_{i=1}^{n_{k}}(\gamma_{\text{th}} \frac{(1-\mathbf{C}_{i})}{\mathbf{C}_{i}} \sum_{m \not \in \zeta_k} \sum_{\substack{l=n_k+1 \\ m \not \in \zeta_k}}^{N} P_{l,m}  + \gamma_{\text{th}}^{2} \sigma^{2} \sum_{i=1}^{n_k} \sum_{j=1}^{i-1} (1+\gamma_{\text{th}})^{i-j-1}\frac{(1-\mathbf{C}_{j})}{\mathbf{C}_{j}} \sum_{m \not \in \zeta_k} \sum_{\substack{l=n_k+1 \\ m \not \in \zeta_k}}^{N} P_{l,m}}{1+\sum_{i=1}^{n_{k}}\gamma_{\text{th}} \frac{(1-\mathbf{C}_{i})}{\mathbf{C}_{i}} \sum_{m \not \in \zeta_k} \sum_{\substack{l=n_k+1 \\ m \not \in \zeta_k}}^{N} P_{l,m} + \gamma_{\text{th}}^{2} \sigma^{2} \sum_{i=1}^{n_k} \sum_{j=1}^{i-1} (1+\gamma_{\text{th}})^{i-j-1}\frac{(1-\mathbf{C}_{j})}{\mathbf{C}_{j}}\sum_{m \not \in \zeta_k} \sum_{\substack{l=n_k+1 \\ m \not \in \zeta_k}}^{N} P_{l,m}}    
    \right)}  
    \end{aligned}
\end{equation}
%\hrulefill 
\end{figure*}

\begin{figure*}[!htbp]
\begin{equation}
\label{eq: closed_form_pmin_variable}
\resizebox{\textwidth}{!}{$ % Begin scaling the math
\begin{aligned}
P_{\text{min}} = \frac{
\sum_{k=1}^{M} \left( \frac{
\sum_{i=1}^{n_k} \left( \frac{\gamma_{\text{th},i} \sigma^2}{C_i |\mathbf{H}_{i,k} \mathbf{W}_{i,k}^{\text{H}}|^2} \right) 
+ \sum_{i=1}^{n_k} \sum_{j=1}^{i-1} \gamma_{\text{th},i} \gamma_{\text{th},j} \frac{\sigma^2}{C_j |\mathbf{H}_{j,k} \mathbf{W}_{j,k}^{\text{H}}|^2} + \sum_{i=1}^{n_k} \sum_{j=1}^{i-1} \sum_{l=1}^{j-1} \gamma_{\text{th},i} \gamma_{\text{th},j} \gamma_{\text{th},l} \frac{\sigma^2}{\mathbf{C}_l |\mathbf{H}_{l,k} \mathbf{W}_{l,k}^{\text{H}}|^2}
}{1 + 
\sum_{i=1}^{n_k} \gamma_{\text{th},i} \frac{(1 - C_i)}{C_i} \sum_{\substack{l=n_k+1 \\ m \not \in \zeta_k}}^{N} P_{l,m} + \sum_{i=1}^{n_k} \sum_{j=1}^{i-1} \gamma_{\text{th},i} \gamma_{\text{th},j} \frac{(1 - C_j)}{C_j} \sum_{\substack{l=n_k+1 \\ m \not \in \zeta_k}}^{N} P_{l,m}  + \sum_{i=1}^{n_k} \sum_{j=1}^{i-1}  \sum_{l=1}^{j-1}\gamma_{\text{th},i} \gamma_{\text{th},j} \gamma_{\text{th},l} \frac{(1-\mathbf{C}_l)}{\mathbf{C}_l}|\mathbf{H}_{l,k} \sum_{\substack{l=n_k+1 \\ m \not \in \zeta_k}}^{N} P_{l,m} 
} \right)
}{1 - \sum_{k=1}^{M} \left( \frac {
\sum_{i=1}^{n_k} \gamma_{\text{th},i} \frac{(1 - C_i)}{C_i} \sum_{\substack{l=n_k+1 \\ m \not \in \zeta_k}}^{N} P_{l,m}  + \sum_{i=1}^{n_k} \sum_{j=1}^{i-1} \gamma_{\text{th},i} \gamma_{\text{th},j} \frac{(1 - C_j)}{C_j} \sum_{\substack{l=n_k+1 \\ m \not \in \zeta_k}}^{N} P_{l,m}  + \sum_{i=1}^{n_k} \sum_{j=1}^{i-1}  \sum_{l=1}^{j-1}\gamma_{\text{th},i} \gamma_{\text{th},j} \gamma_{\text{th},l} \frac{(1-\mathbf{C}_l)}{\mathbf{C}_l}|\mathbf{H}_{l,k} \sum_{\substack{l=n_k+1 \\ m \not \in \zeta_k}}^{N} P_{l,m} 
}{1 + 
\sum_{i=1}^{n_k} \gamma_{\text{th},i} \frac{(1 - C_i)}{C_i} \sum_{\substack{l=n_k+1 \\ m \not \in \zeta_k}}^{N} P_{l,m}  + \sum_{i=1}^{n_k} \sum_{j=1}^{i-1} \gamma_{\text{th},i} \gamma_{\text{th},j} \frac{(1 - C_j)}{C_j} \sum_{\substack{l=n_k+1 \\ m \not \in \zeta_k}}^{N} P_{l,m} + \sum_{i=1}^{n_k} \sum_{j=1}^{i-1}  \sum_{l=1}^{j-1}\gamma_{\text{th},i} \gamma_{\text{th},j} \gamma_{\text{th},l} \frac{(1-\mathbf{C}_l)}{\mathbf{C}_l}|\mathbf{H}_{l,k} \sum_{\substack{l=n_k+1 \\ m \not \in \zeta_k}}^{N} P_{l,m} 
} \right)}
\end{aligned}
$} % End scaling the math
\end{equation}
\end{figure*}

The overall power allocation strategy balances the power distribution within and between clusters, ensuring that total power remains within the maximum power limit \( P_{\text{Max}} \) while satisfying the QoS requirements for all users in the network.
We propose an iterative power control algorithm that dynamically adjusts \(P_{\text{max}}\) and refines the power allocation strategy over multiple iterations. The fine-tuned power allocation supports the QoS requirements and the successful execution of SIC, enabling the system to meet the demands of massive connectivity and high performance in wireless networks.
The proposed algorithm iteratively updates \(P_{\text{max}}\) based on the difference between \(P_{\text{max}}\) and \(P_{\text{min}}\) for each user \(i\), expressed as:

\begin{equation}
    \delta_i = P_{\text{max}_i} - P_{\text{min}_i}
\end{equation}

This disparity factor \(\delta_i\) serves as a feedback mechanism, known as the power allocation feedback (PAF), which ensures efficient resource management and dynamically adjusts users' transmit power. The power control algorithm, detailed in Algorithm \ref{alg:power_control}, uses an iterative optimization approach based on gradient descent to minimize the difference between the current and the desired power levels. This optimization enhances resource utilization and ensures adherence to system constraints.

\begin{algorithm}[!htbp]
\caption{Iterative Power Control}
\label{alg:power_control}
\begin{algorithmic}[1]
    \State Initialize $P_{max}$
    \State $P_{\text{min}} \gets \texttt{calcul\_P}$
    \State Compute the total power range $Pr \gets \sum P_{max} - \sum P_{min}$
    \For{each user \(i\)}
        \State Calculate the difference \(\delta_i \gets P_{\text{max}_i} - P_{\text{min}_i}\)
    \EndFor
    \For{each user \(i\)}
        \State Update \(P_{\text{max\_up}_i} \gets \left(\frac{Pr - \delta_i}{M-1}\right) + P_{\text{min}_i}\)
    \EndFor
    \State Replace \(P_{\text{max}}\) with \(P_{\text{max\_up}}\) for the next iteration
\end{algorithmic}
\end{algorithm}

\section{User Clustering}

\subsection{\textbf{User Pairing}}

\begin{figure}[!htbp]
    \centering 
    \includegraphics[width=\columnwidth]{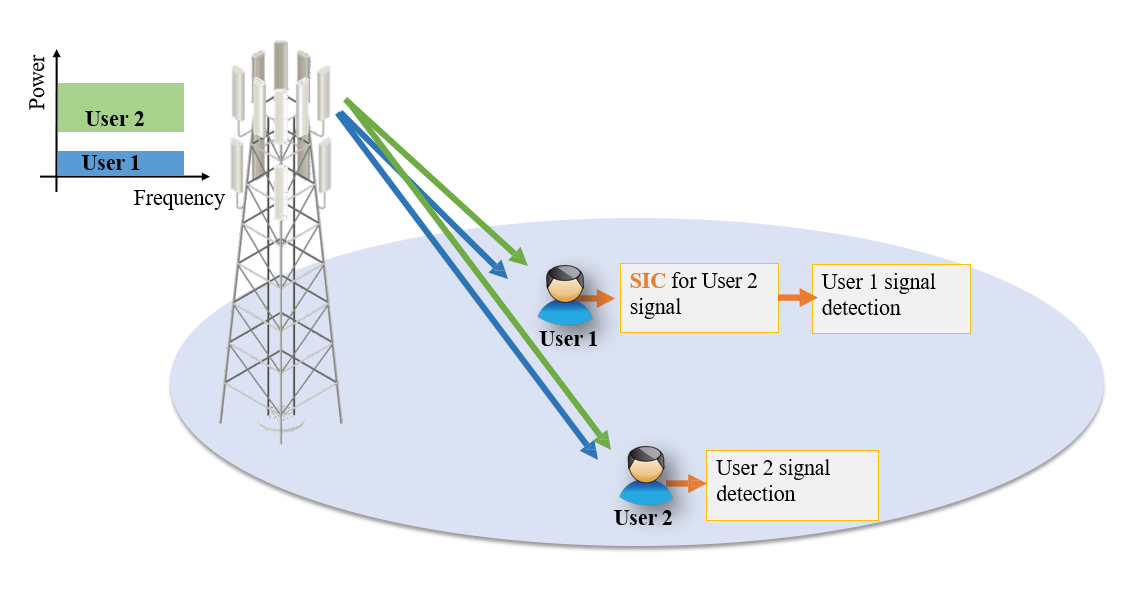}
    \caption{ 2-user PD\_{NOMA}}
   \label{fig noma}
\end{figure}
In MIMO-NOMA systems, user pairing selects which users to pair based on their channel conditions or spatial characteristics to optimize the utilization of the available spectrum. This section investigates two distinct user pairing methods based on channel gain and spatial correlation. Each method has unique characteristics and implications for the system's performance.
\subsubsection{Near-Far user pairing}
The first method pairs users based on their proximity to the base station; the user with the highest channel gain (or closest to the BS) is paired with the user who exhibits the lowest channel gain (or farthest from the BS). This pairing strategy exploits the channel gain disparities among users to maximize the system's spectral efficiency. Following the power control mechanism described above, power is allocated to ensure that users with lower channel gains receive a higher power allocation to compensate for their weaker signals. Conversely, users with stronger channel gain are allocated lower power levels, as demonstrated in Figure~\ref{fig noma}. This method promotes fairness among users and optimizes the utilization of available power resources within the cluster.

\subsubsection{Correlation-based user pairing }

The second method for user pairing is based on spatial correlation among users to significantly improve the overall system performance. This approach involves pairing users with high collinearity within the same cluster while ensuring low correlations with users in different clusters, thereby minimizing inter-cluster interference and enhancing the clarity and strength of transmitted signals, which is essential for SIC success.
%This pairing method is successful due to the effectiveness of the power control process. 
By continuously adjusting power levels in response to QoS requirements and available power budget constraints, the power control algorithm ensures that each pair of users achieves the desired SINR. This alignment of power allocation and control strategies with the spatial characteristics of users maximizes the efficiency and fairness of the NOMA framework. It results in a more robust and reliable multi-user communication environment. As shown in Figure \ref{fig up}, Near-far pairing requires higher power than collinearity-based pairing due to limited pairing options for a sparse scenario. It tends to perform similarly for the dense scenario where more extensive pairing options are available.

\begin{figure}[!htbp]
    \centering \includegraphics[width=\columnwidth]{ 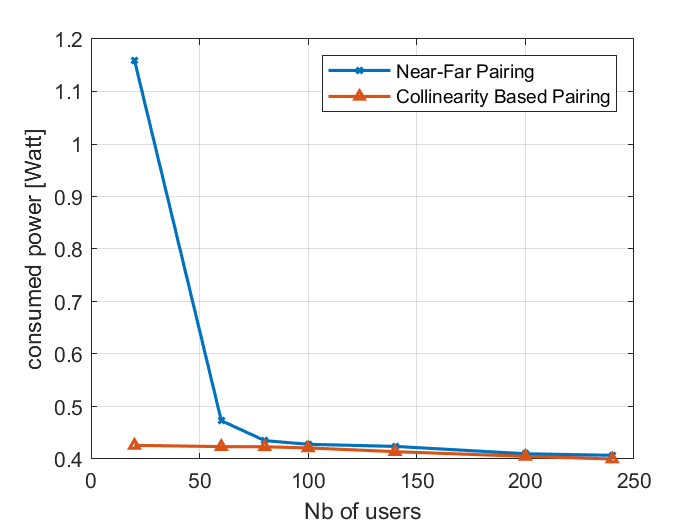}
    \caption{Comparison of user pairing approaches }
   \label{fig up}
\end{figure}

\subsection{\textbf{Correlation-based Iterative Algorithm (CIA)}}

The CIA is a clustering technique that groups users in a MIMO-NOMA system based on their spatial correlation, where correlated users who experience similar interference patterns are grouped into the same cluster. In contrast, users with low spatial correlation with more distinct and independent channel characteristics are placed in different clusters. Each cluster is served by a separate beam, using a linear zero-forcing beamforming technique, which focuses the beam toward the intended cluster while minimizing its impact on other clusters. 
The proposed CIA Algorithm comprises three phases: 
\begin{enumerate}
    \item \textbf{Initial User Group Formation:} \\
    The process begins by forming groups of users based on their spatial correlation. Users are paired according to their collinearity coefficients. The pair exhibiting the highest spatial correlation is placed in the first available cluster. The next pair is then assigned to the next available cluster until all clusters are occupied. This phase ensures that users with the strongest correlation factors are grouped together, leading to efficient beamforming and reduced inter-cluster interference.
    
    \item \textbf{Cluster Merging and Expansion:} In the second phase, the algorithm evaluates the possibility of combining current user groups to increase the number of users served within a single cluster. This phase examines the feasibility of merging two clusters while considering power constraints such as maximum power limits and interference restraints. After a successful merge, there may be an empty cluster where a new pair of correlated users or even a single user (singleton) can be added while still meeting power constraints. 
    
    \item \textbf{Final User Assignment and Power Control:} The final phase involves reviewing the list of remaining users who have not yet been assigned to any group. The algorithm tests, one by one, whether these users can be added to existing clusters without violating power constraints. Each potential addition is evaluated to ensure that the SIC process can still be performed successfully after the modification.
\end{enumerate}

The CIA dynamically performs these three phases through an iterative process to ensure that user clusters are optimally formed based on spatial correlation while simultaneously addressing power constraints and interference management. This mechanism significantly improves power usage, ensuring massive connectivity.
Dynamic power control further enhances the system's robustness by continuously adjusting power levels to meet evolving network conditions. This ultimately results in a highly efficient and reliable MIMO-NOMA system that can achieve spectral efficiency and fairness.

\begin{algorithm}[!htbp]
\caption{Correlation-based Iterative Algorithm (CIA)}
\label{alg:cia}
\begin{algorithmic}[1]
\State \textbf{Input:} Channel matrix $\mathbf{H}$, Maximum transmit power $P_{\text{Max}}$, Collinearity matrix $\mathbf{C}$, Sorted correlated pairs $\mathcal{C}$, Number of antennas $M$, Number of users $K$ and empty clusters $\zeta_i$
\State \textbf{Output:} Optimized clusters $\zeta$, Power allocation vector $P$, Number of served users $U$.

\State \textbf{Phase 1: Initial User Group Formation}
\For{each pair $(i,j)$ in $\mathcal{C}$}
    \For{$m \in \{1, \dots, M\}$}
        \If{$i$ is in $\zeta$ but not $j$}
            \State \text{Singletons} \(\gets \{j\}\)
        \ElsIf{$j$ is in $\zeta$ but not $i$}
            \State \texttt{Singletons} \(\gets \{i\}\)
        \ElsIf{neither $i$ nor $j$ are in $\zeta$}
            \State $P \gets$ \texttt{calcul\_P}($\zeta, \mathbf{H}, \mathbf{C}$)
            \If{\texttt{test\_P}($P$, $P_{\text{Max}}$) and $m \leq M$}
                \State $\zeta_m \gets \{i,j\}$
            \Else
                \State \text{Non\_alloc} \(\gets \{i,j\}\)
            \EndIf
        \EndIf
    \EndFor
\EndFor

\State \textbf{Phase 2: Cluster Merging and Expansion}
\For{$m \in \{1, \dots, M\}$}
    \For{$n \in \{1, \dots, M\}$}
        \For{$el \in$ non\_alloc}
            \State $\zeta_m \gets \zeta_m \cup \zeta_{n}$
            \State $\zeta_{n} \gets $ Non\_alloc($el$)
            \State \texttt{Calcul\_P ($\zeta, \mathbf{H}, \mathbf{C}$)}
            \If{\texttt{test\_P}($P$, $P_{\text{Max}}$)}
                \State keep this $\zeta$ structure
            \Else 
                \State Return to initial $\zeta$ structure
            \EndIf
        \EndFor
    \EndFor
\EndFor
\State \textbf{Phase 3: Final User Assignment and Power Control}
\For{$m \in \{1, \dots, M\}$}
    \For{$el \in $ singletons}
        \State $\zeta_m \gets$ singletons($el$)
        \State \texttt{Calcul\_P ($\zeta, \mathbf{H}, \mathbf{C}$)}
            \If{\texttt{test\_P}($P$, $P_{\text{Max}}$)}
                \State keep this $\zeta$ structure
            \Else 
                \State Return to initial $\zeta$ structure
            \EndIf
    \EndFor
\EndFor

\State \Return Final clusters $\zeta$, number of served users $U$, and power allocation $P$.
\end{algorithmic}
\end{algorithm}

The CIA algorithm is summarized in Algorithm~\ref{alg:cia}.  \texttt{Calcul\_P()}  computes the allocated power to each user in a cluster according to the power allocation equation provided in Equation~(\ref{eq: closed_form_pmin}). This function ensures that the power distribution among users is optimized to meet the desired communication requirements. \texttt{test\_P()} serves as a validation step, ensuring that the total power allocated to the cluster meets the power constraint of Equation~(\ref{eq: inegality}). 

\subsection{\textbf{K-means++ user clustering algorithm (KUC)}}

A robust approach to achieve the full potential of the massive MIMO and NOMA techniques is to perform the K-means++ algorithm to cluster users based on spatial correlation. K-means++ is particularly advantageous in our case due to its efficient initialization process, which helps the algorithm converge rapidly and provides better clustering results than standard K-means. The correlation factor reflects the degree of similarity in signal propagation characteristics and interference patterns. By grouping users with similar spatial correlation profiles, we are able to manage interference and improve spectral and energy efficiencies.

\begin{figure}[!htbp] 
\centering 
\includegraphics[width=\columnwidth]{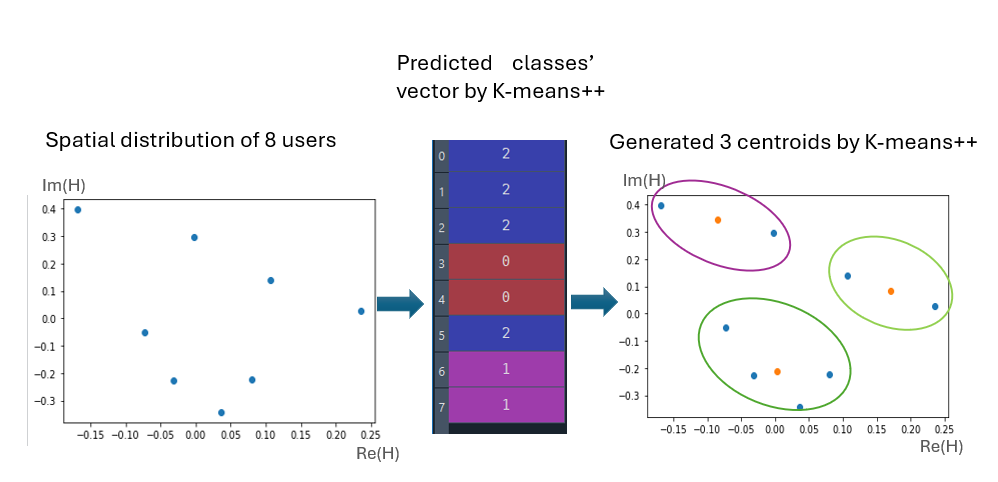} \caption{Example  of K-Means++ Clustering.} 
\label{fig:k-means++} 
\end{figure}

The proposed KUC is detailed in Algorithm \ref{alg:KUC}. We generate initial cluster centroids, representing the starting points for grouping users. These centroids are chosen using a probabilistic method that selects diverse starting points, leading to rapid convergence. These centroids denote the centers of each cluster in the multidimensional space defined by user attributes, such as channel gain and collinearity. The spatial correlation between users is calculated and used as input for the K-means++ algorithm. Each user is then assigned to the nearest centroid by calculating the Euclidean distance between the user's attribute values and the centroid's position, ensuring that users with similar characteristics are grouped together.
Once all users are assigned to the nearest centroid, the centroids are updated by recalculating their positions as the average values of all users currently assigned to that cluster. This iterative process continues until the algorithm converges.

This clustering approach is illustrated in Figure \ref{fig:k-means++}, where eight users are spatially distributed and then grouped into 3 clusters using K-means++. The generated centroids represent the optimal points around which users are clustered, ensuring that each group consists of users with highly correlated channels.

Following the clustering process, users within each cluster are allocated power while respecting the system's overall power budget. The power allocation algorithm ensures that the total power assigned to each cluster remains within the maximum allowable power limit, $P_{max}$, as defined by the power control process. 
The allocated power is then rigorously validated through the power control process to ensure that the distribution meets all constraints, such as maintaining the required SINR for each user and preventing excessive interference. This validation step is crucial for confirming that the allocated power levels satisfy QoS requirements and enable the SIC to be successful.

\begin{algorithm}[!htbp]
\caption{K-Means++ User Clustering (KUC)}
\label{alg:KUC}
\begin{algorithmic}[1]
\State \textbf{Input:} Channel matrix $\mathbf{H}$, Maximum transmit power $P_{\text{Max}}$, Collinearity matrix $\mathbf{C}$, Sorted correlated pairs $\mathcal{C}$, Number of antennas $M$, Number of users $K$ and empty clusters $\zeta_i$
\State \textbf{Output:} Optimized clusters $\zeta$, Power allocation vector $P$, Number of served users $U$.

\State \textbf{Phase 1: Initialize Centroids (K-means++ Initialization)}
\State  Select the first centroid $\mu_1$ randomly from the users.
\For{$i = 2$ to $M$}
    \State Select centroid $\mu_i$ with probability proportional to its squared distance from the nearest existing centroid:
    \State \( D_n \gets \min_{j < i} \sum_{p=1}^{M} \left(C[n, p] - C[\mu_j, p]\right)^2 \)
    \State \( \text{Probability}(n) \gets \frac{D_n}{\sum_{m=1}^{N} D_m} \)
\EndFor
\Repeat
\State \textbf{Phase 2: Assign Users to Centroids}
\For{each user $n \in \{1, \dots, N\}$}
    \For{each centroid $\mu_m$ where $m \in \{1, \dots, M\}$}
        \State Calculate the Euclidean distance $d_{n,m}$:
        \State \( d_{n,m} \gets \sqrt{\sum_{p=1}^{M} \left(C[n, p] - C[\mu_m, p]\right)^2} \)
    \EndFor
    \State Assign user $n$ to the cluster $\zeta_m$ corresponding to the minimum $d_{n,m}$.
\EndFor

\State \textbf{Phase 3: Update Centroids}
\For{each cluster $\zeta_m$ where $m \in \{1, \dots, M\}$}
    \State Update centroid $\mu_m$ as the mean of all users assigned to $\zeta_m$:
    \State \( \mu_m = \frac{1}{|\zeta_m|} \sum_{n \in \zeta_m} \mathbf{C}(n) \)
\EndFor
\Until{centroids stabilize (i.e., no significant change)}

\State \textbf{Phase 4: Refine Clusters Based on Power Constraints}
\State $\varphi$ represents the classes vector generated by K-Means++
\For{each class number $m \in \varphi $}
    \State $\zeta_m \gets \varphi(m)$.
    \State $P \gets$ \texttt{calcul\_P}($\zeta, \mathbf{H}, \mathbf{C}$)
    \If{\texttt{test\_P($P$, $P_{\text{max}}$)}}
        \State $\text{non-alloc} \gets  \{m\}$
        \State $\zeta_m \gets  \zeta_m  \setminus \{m\} $
    \EndIf
\EndFor

\State \textbf{Phase 5: Final Power Control}
    \State $P_{\text{max}} \gets$ \texttt{Control\_P($P$, $P_{\text{max}}$)}.

\State \Return Final clusters $\zeta$, number of served users $U$, and power allocation $P$.

\end{algorithmic}
\end{algorithm}

\subsection{\textbf{GWO-based User Clustering}}

The GWO is a robust metaheuristic algorithm inspired by the social hierarchy and hunting behavior of grey wolves in nature. Initially introduced by Mirjalili et al. \cite{Mirjalili2014}, GWO has gained popularity for its simplicity and efficiency in solving complex optimization problems in various domains. GWO has been proven to be highly effective in finding optimal solutions by balancing exploration (searching new areas of the solution space) and exploitation (refining the best-known solutions), which prevents the algorithm from getting trapped in local optima.
In our context, GWO is adapted for user clustering to maximize the number of served users while ensuring that the power allocation for each cluster respects the system's power budget constraints. The GWO algorithm operates by simulating the natural leadership hierarchy of grey wolves, where the best solutions are represented by alpha ($\alpha$), beta ($\beta$), and delta ($\delta$) wolves. The remaining wolves ($\omega$) follow these leaders by iteratively adjusting their positions accordingly. Each wolf's position represents a potential solution, i.e., vectors of user assignments where each vector corresponds to a specific clustering configuration. The effectiveness of the GWO in optimizing user clustering and power allocation tasks is mainly defined by its fitness function. This fitness function is carefully designed to maximize the number of served users while ensuring that the allocated power remains within the system's constraints. 
\begin{equation}
\label{eq:fitness}
\begin{split}
\text{Fitness} = \text{Served Users} - \left(\text{Non-Allocated Users} \times \text{Penalty}\right) \\
- \text{Penalty Power} \times \max\left(0, \text{Total Power} - M\right)
\end{split}
\end{equation}

Here,  \emph{Served Users} represents the number of users successfully clustered and served within the network. The second term, \emph{Non-Allocated Users} multiplied by a \emph{Penalty}, accounts for users who could not be allocated due to constraints. \emph{Penalty Power} multiplied by the excess power (\(\max(0, \text{Total Power} - M)\)) imposes an additional penalty if the total power allocated exceeds the maximum allowable power, \(M\). By integrating these components, the fitness function ensures that the GWO algorithm converges to configurations that maximize user allocation and maintain power efficiency, thus optimizing overall system performance.
\begin{figure}[!htbp] 
\centering 
\includegraphics[width=\columnwidth]{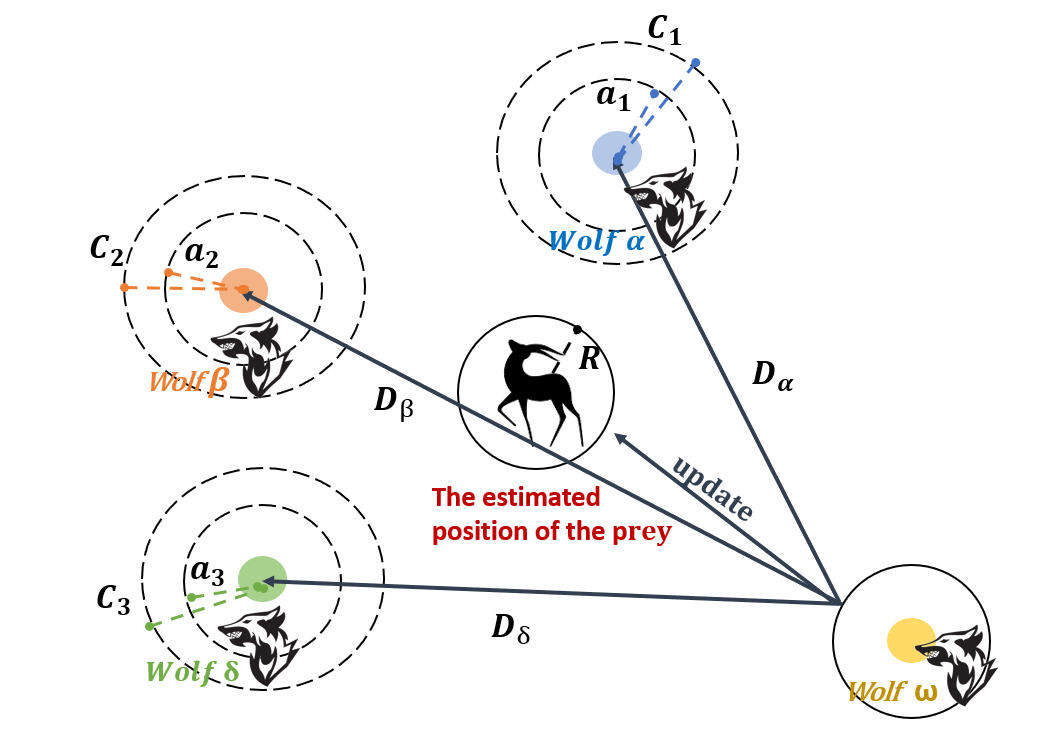} \caption{GWO hunting process.} 
\label{fig:gwo} 
\end{figure}
GWO was chosen for its effectiveness and rapid convergence to optimal solutions in complex optimization tasks. It operates through an iterative process that begins with the initialization of a population of wolves, where each wolf represents a potential solution $\mathbf{X} = [x_1, x_2, \dots, x_n]$. The initial positions of the wolves are randomly assigned, and the collinearity matrix $\mathcal{C}$ is calculated from the characteristics of the user channel to guide this initial clustering. The fitness of each wolf is then evaluated based on the fitness function in Equation (\ref{eq:fitness}) designed to maximize the number of served users while maintaining power constraints within the maximum allowable limit, $P_{\text{max}}$.
The wolves are ranked according to their fitness, with the top three solutions, $\alpha$, $\beta$, and $\delta$ wolves, guiding the remaining wolves.
The positions of all wolves are iteratively updated as they move towards $\alpha$, $\beta$, and $\delta$ wolves, simulating the process of hunting prey, where the prey represents the optimal solution $\mathbf{X}_{\text{prey}}$ as depicted in Figure \ref{fig:gwo}. This movement is mathematically formulated by calculating the distance $\mathbf{D}$ between the wolf's current position and the prey and then updating the wolf's position as follows:
\begin{equation}
    \mathbf{D} = |\mathbf{C} \cdot \mathbf{X}_{\text{prey}} - \mathbf{X}|,
\end{equation}
\begin{equation}
    \mathbf{X}_{\text{new}} = \mathbf{X}_{\text{prey}} - \mathbf{A} \cdot \mathbf{D},
\end{equation}
where $\mathbf{X}_{\text{new}}$ and $\mathbf{X}_{\text{prey}}$ denote the new and prey positions, respectively. The coefficient vectors $\mathbf{A}$ and $\mathbf{C}$ control the movement of the wolves, allowing for the balance of exploration and exploitation throughout the algorithm. These vectors are defined as:
\begin{equation}
    \mathbf{A} = 2\mathbf{a} \cdot \mathbf{r}_1 - \mathbf{a}, \quad \mathbf{C} = 2 \cdot \mathbf{r}_2,
\end{equation}

$\mathbf{a}$ linearly decreases from 2 to 0 throughout iterations, and $\mathbf{r}_1$ and $\mathbf{r}_2$ are random vectors in $[0, 1]$ that introduce variability into the wolves' movements, ensuring a proper balance between exploration (searching new areas of the solution space) and exploitation (refining potential or best-known solutions). After each position update, the allocated power to each cluster is recalculated, and the fitness of the new configuration is evaluated. If the total power allocated exceeds $P_{\text{max}}$, a penalty is applied to the fitness score, ensuring that the algorithm prioritizes solutions that adhere to power constraints. The algorithm checks for convergence by monitoring the changes in wolves' positions; if the positions stabilize, indicating that the wolves have converged on the optimal solution, the process stops. The final user clustering and power allocation configuration, represented by the alpha wolf position, is then selected as the optimal solution. This detailed iterative process optimizes user clustering and power allocation and improves system performance and efficient resource utilization in large-scale and complex MIMO-NOMA scenarios. GWO enables us to design effective and scalable solutions, leading to enhanced spectral and energy efficiency in networks with a high number of users.

\begin{algorithm}[!htbp]
\caption{Grey Wolf Optimizer for User Clustering (GWO)}
\label{alg:GWO}
\begin{algorithmic}[1]
\State \textbf{Input:} Channel matrix $\mathbf{H}$, Maximum transmit power $P_{\text{Max}}$, Collinearity matrix $\mathbf{C}$, Number of antennas $M$, Number of users $K$, Initial wolf pack assignments $\zeta_i$
\State \textbf{Output:} Optimized clusters $\zeta$, Power allocation vector $P$, Number of served users $U$.

\State \textbf{Phase 1: Initialization}
\State Initialize the grey wolf pack with random user-to-cluster assignments.
\State Set initial parameters $a$, $A$, and $C$ for exploration and exploitation.

\State \textbf{Phase 2: Main Loop}
\For{each iteration $i \in \{1, \dots, \text{max\_iterations}\}$}
    \State Update parameters $a_i$, $A_i$, and $C_i$:
    \State \( \mathbf{A_i} \gets 2\mathbf{a_i} \cdot \mathbf{r}_1 - \mathbf{a_i} \)
    \State \( \mathbf{C_i} \gets 2 \cdot \mathbf{r}_2 \)
    
    \State \textbf{Exploration Phase:}
    \For{each wolf $w \in \text{pack}$}
        \State Update position $X_w$ for exploration:
        \State \( X_w^{\text{explore}} \gets X_w - A_i \cdot |C_i \cdot X_{\text{prey}} - X_w| \)
        \State Generate preliminary clusters $\zeta_m$ based on $X_w^{\text{explore}}$.
    \EndFor

    \State \textbf{Exploitation Phase:}
    \For{each wolf $w \in \text{pack}$}
        \State Refine position $X_w$ for exploitation:
        \State \( X_w^{\text{exploit}} \gets X_w^{\text{explore}} - A_i \cdot |C_i \cdot X_{\text{best}} - X_w^{\text{explore}}| \)
        \State Generate final clusters $\zeta_m$ based on $X_w^{\text{exploit}}$.
        \State $P \gets$ \texttt{calcul\_P}($\zeta, \mathbf{H}, \mathbf{C}$)
        \State fitness $\gets$ \texttt{calcul\_fitness}($\zeta, P$)
        \If{\texttt{test\_P}($P$, $P_{\text{max}}$)}
            \State Update positions of $\alpha$, $\beta$, and $\delta$ wolves based on fitness.
        \EndIf
    \EndFor
\EndFor

\State \textbf{Phase 3: Final Output}
\State $P_{\text{final}} \gets$ \texttt{control\_P}($P$, $P_{\text{max}}$)
\State \Return Optimized clusters $\zeta$, Number of served users $U$, and Power allocation $P_{\text{final}}$.
\end{algorithmic}
\end{algorithm}

\texttt{calcul\_fitness()}  evaluates the number of successfully served users, imposes penalties for non-allocated users, and applies additional penalties for exceeding the maximum allowable power, as defined in Equation~(\ref{eq:fitness}). This evaluation guides the algorithm towards configurations that maximize user allocation efficiency and power management.

\section{Computational Complexity Analysis}
Optimizing downlink MIMO-NOMA systems using user clustering and power allocation techniques requires careful consideration of computational complexity. This section analyzes the CIA, KUC, and GWO-based user clustering methods.

\begin{table}[H]
\centering
\caption{Parameters for computation complexity calculation}
\begin{tabular}{|c|p{0.75\columnwidth}|}
\hline
\textbf{Parameter} & \textbf{Description} \\ \hline
$M$ & Number of clusters (also the number of antennas). \\ \hline
$K$ & Total number of users in the system. \\ \hline
$N$ & Number of extracted correlated pairs of users. \\ \hline
$U$ & Number of users per cluster. \\ \hline
$S$ & Size of the user assignment pack. \\ \hline
\end{tabular}
\vspace{0.5cm}
\label{tab:parameters}
\end{table}
The CIA algorithm comprises three phases: initial cluster formation, cluster merging and expansion, and final user assignment. In the initial phase, where user pairs are checked and assigned based on collinearity, CIA has a complexity of \(O(N \times M^2 \times U^2)\) due to the combined cost of assigning users to clusters and recalculating the power coefficients for each user in the cluster. The subsequent phase of merging and expanding clusters adds an additional complexity of \(O(M^4 \times U^2)\), mainly due to the iterative power calculations and adjustments needed to meet power constraints after each merge. Thus, the total complexity of the CIA algorithm can be expressed by \(O(M^4 \times U^2)\), which makes it particularly suitable for moderate-scale systems, as it efficiently minimizes interference within clusters but becomes computationally demanding as the number of clusters \(M\) grows.

GWO algorithm has a complexity of \(O( M^4 \times U^4)\), is driven by both the number of iterations and the operations required in each phase. In the exploration phase, where each wolf updates its position and clusters are preliminarily formed, the complexity is \(O(S \times M + K \times M)\) per iteration. The exploitation phase, which includes refined position updates and power allocation recalculations, has a complexity of \(O(S \times M + M^2 \times U^2)\), with the power allocation dominating due to recalculations for each user in a cluster. Since GWO iterates to convergence, the total complexity depends heavily on the number of iterations, with each iteration requiring a cost \(O(M^2 \times U^2)\) for power allocation adjustments across clusters. 

KUC aims to minimize intra-cluster interference by optimally assigning users to centroids, which has a complexity dominated by \(O(K \times M^2 + M^3 \times U^2)\). This complexity arises from five phases, beginning with centroid initialization, where selecting centroids based on distances to existing centroids requires \(O(K \times M)\) operations. The user assignments to the centroids and subsequent centroid updates contribute \(O(K \times M^2)\) since each user must be evaluated for the distance to each centroid, and the centroids are recalculated iteratively. However, the most computationally expensive part of KUC lies in the cluster refinement phase, which has a complexity of \(O(M^3 \times U^2)\) due to repeated power calculations and adjustments for users within clusters to meet power constraints. Thus, KUC has a lower complexity than CIA and GWO, which makes it suitable for large-scale antenna arrays equipped with hundreds of antennas.

\section{Performance Evaluation}
\label{Simulations}

This section provides a comprehensive performance evaluation of the proposed user clustering methods such as CIA, KUC, and GWO. As benchmark techniques, we consider Random clustering and User Pairing. The latter has been updated to consider user correlation in the user selection process. The former has also been updated to assign users to clusters randomly, but there is a SIC check after each allocation to avoid SIC failure. The benchmark techniques have been modified to ensure comparability with the proposed approaches. The review focuses on three performance indicators: the number of assigned users, the total transmit power, and the energy efficiency. These metrics are examined under the simulation parameters summarized in Table~\ref{table:parameters}.

\begin{table}[!ht]
\centering
\caption{Simulation parameters.}
\label{table:parameters}
\begin{tabular}{|l|l|}
\hline
\textbf{Parameter} & \textbf{Value} \\ \hline
SINR threshold \((\gamma_{th})\) & 10 dB \\ \hline
Noise spectral density \((\sigma)\) & \(10^{-2}\) W/Hz\\ \hline
Maximum per-antenna power \((P_{max})\) & 1 Watt \\ \hline
Bandwidth & 200 kHz \\ \hline
\end{tabular}
\end{table}

\subsection{ Number of Allocated Users}
\label{subsec:user_allocation}

\begin{figure}[!ht]
\centering
    \begin{overpic}[scale=0.65]{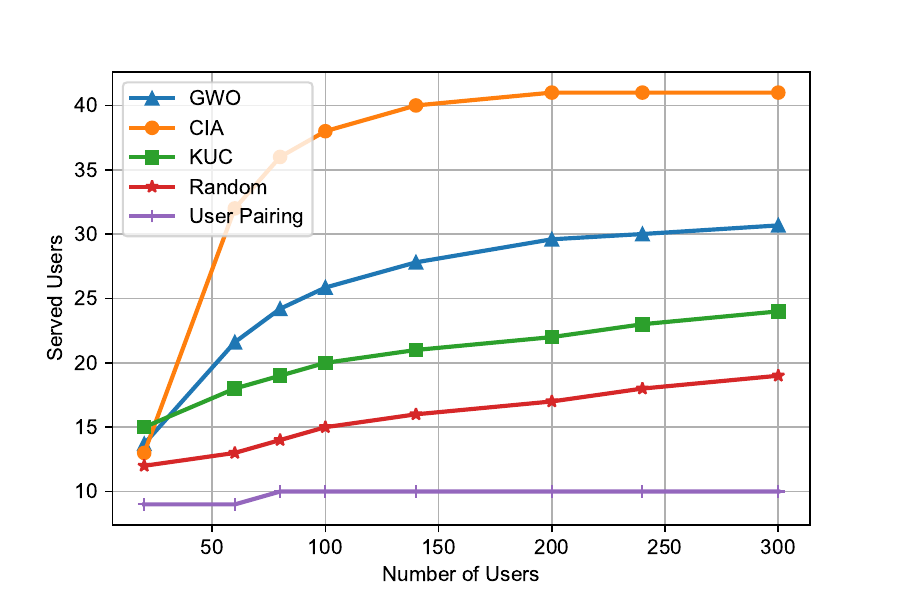}
    \end{overpic}
\caption{Number of served while varying the total number of users for $5$ RF chains. }
\label{fig:served_users_n_users}
\end{figure}

Figure \ref{fig:served_users_n_users} illustrates the number of served users varying the total number of users in the system with a number of RF chains fixed to $5$. CIA significantly outperforms the other algorithms, serving around $40$ users when the total number of active users exceeds $150$. GWO serves around $30$ users, approximately $25\%$ less than CIA, and KUC serves less than $25$ users, i.e., $37.5\%$ less than CIA. This performance enhancement is mainly due to the ability of the CIA to leverage spatial correlation among users, reducing inter-cluster interference in dense networks where user channels are more likely to be correlated. The random algorithm, as expected, serves at most $18$ users, $55\% $ less than the CIA, and User Pairing, restricted to serving a maximum of $2$ users per cluster, serves at most $10$ users, i.e., $75\% $ less than the CIA. Indeed, its fixed pairing capacity, even if it limits the interference in grant-free approaches, severely limits the scalability.

\begin{figure}[!ht]
\centering
    \begin{overpic}[scale=0.65]{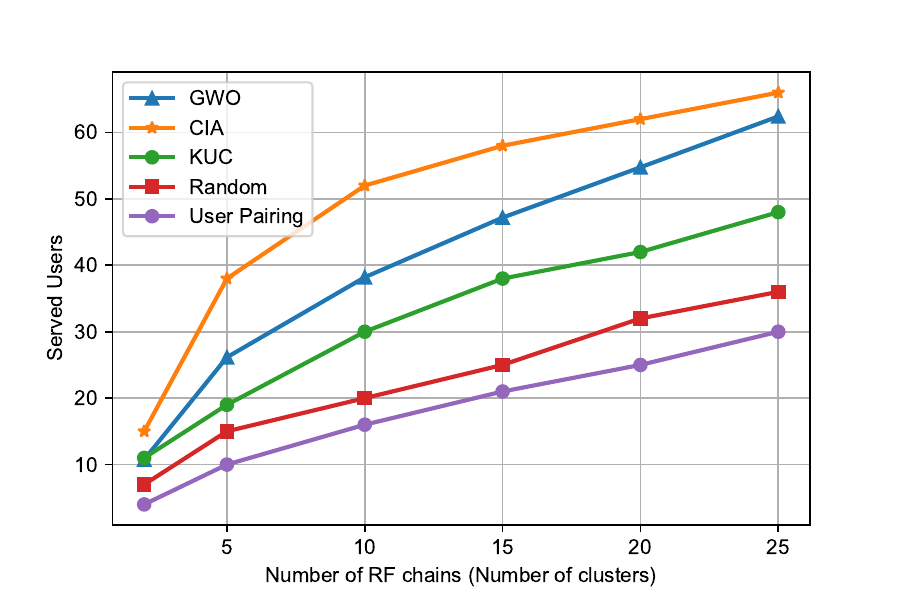}
    \end{overpic}
\caption{Number of served users while varying the number of RF chains for 100 users.}
\label{fig:served_users_n_antennas}
\end{figure}

In Figure \ref{fig:served_users_n_antennas}, we vary the number of RF chains (clusters) for a scenario where the total number of users is set to $100$. CIA, once again, demonstrates superior performance, serving about $67$ users with $25$ clusters. This trend highlights the effectiveness of CIA in leveraging spatial correlation and optimizing beamforming in massive MIMO systems to serve more users and maximize system capacity. GWO follows by serving $62$ users, which is $7.5\%$ less than CIA, benefiting from its adaptive nature and optimization strategies when increasing RF chains. 
KUC serves $50$ users,  $25.4\%$ less than CIA, and Random allocation serves up to $35$ users, i.e. $47.8\%$ less than CIA. User Pairing shows a linear increase in the number of served users with respect to the number of RF chains until inter-cluster and intra-cluster interference restrains filling all the clusters by user pairs. In fact, for $25$ RF chains, User Pairing serves only $30$ users, that is, $55.2\%$ less than the CIA.

\subsection{Transmit power}

\begin{figure}[!ht]
\centering
    \begin{overpic}[scale=0.65]{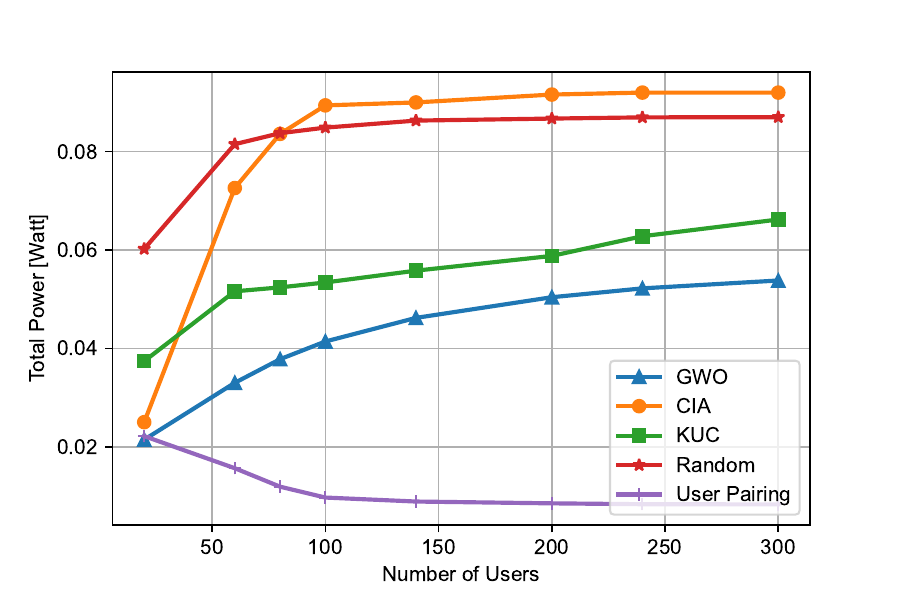}
    \end{overpic}
\caption{Total power consumption while varying the number of users for $5$ RF chains. }
\label{fig:served_power_n_users}
\end{figure}

Figure \ref{fig:served_power_n_users} illustrates the total power consumption depending on the number of users for $5$ RF chains. We can observe that the cost of efficiency of the CIA,  in terms of the number of served users, is at the expense of significant energy consumption. As we can see, for more than $100$ users, the total power consumption of the CIA is higher than all other techniques, around $0.09$ W.  Random allocation follows closely, consuming about $0.085$ W, and KUC shows moderate power consumption, around $0.065$ W for $300$ users ($0.054$ W for $100$ users), which is $22.2\%$ lower than CIA.
GWO, on the other hand, is particularly interesting as it serves a considerable number of users while consuming far less energy, around $0.05$ W between $100$ and $300$ users. This represents a $44.4\%$ decrease in power consumption compared to CIA, making GWO an attractive option for modern wireless networks where power efficiency and user service capacity are critical considerations. Finally, the User Pairing algorithm, which serves a maximum of $2$ users per cluster, maintains a steady power consumption of around $0.01$ W. Its lowest power consumption shows potential in energy-constrained scenarios. However, this may come at the cost of the number of served users. As the number of users increases, all algorithms reach their power constraints that limit further scalability.

\begin{figure}[!ht]
\centering
    \begin{overpic}[scale=0.65]{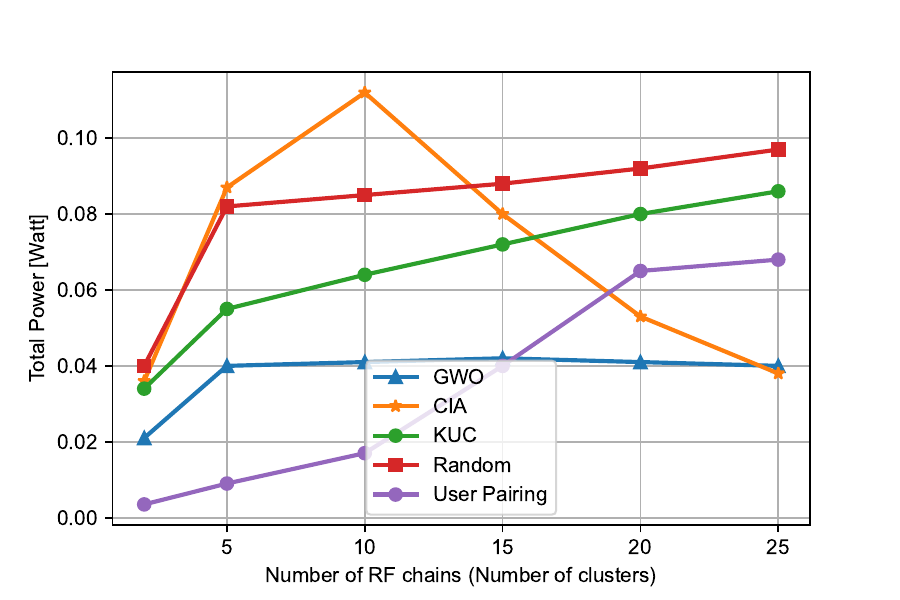}
    \end{overpic}
\caption{Total power consumption with varying RF chain numbers for 100 active users.}
\label{fig:served_power_n_antennas}
\end{figure}

Figure \ref{fig:served_power_n_antennas} shows the total power consumption as a function of the number of RF chains (clusters) for a scenario with $100$ users. CIA initially consumes significant power, reaching a peak of around $0.11$ W for the $10$ clusters. This high energy consumption is directly related to the growing number of served users, around $52$, as shown in Figure \ref{fig:served_users_n_antennas}. However, as the number of clusters increases beyond $10$, we reach the spatial correlation boundary, which means that further power optimization is no longer possible. This constraint arises from the large number of clusters and the absence of a substantial number of users with enough collinearity. Together, these factors result in a reduction in performance gains when more antennas are added.
Random allocation also consumes considerable power, reaching $0.09$ W on $10$ RF chains and maintaining this high consumption even as the number of clusters increases. 
KUC, which exhibits moderate power consumption, peaks at around $0.068$ W for $10$ clusters. This behavior reflects that KUC, while less greedy than CIA or Random, still requires significant power to maintain and optimize user clustering as the number of clusters increases.
GWO is the most power-efficient approach, consuming only about $0.04$ W at its peak. It minimizes power consumption by ensuring that only the necessary power is used to meet user demands, avoiding excess usage that is common in non-optimized settings. This makes GWO an attractive option for power-sensitive environments, offering a good balance between user service and power consumption.
It is trivial that User Pairing achieves lower energy consumption than the other techniques since it allocates at most $2$ users per cluster, which does not require significant power consumption to enable the signals to be distinguishable. In fact, the higher the NOMA reuse factor, the higher the power required since the first user to decode considers all the other signals' interference.

\subsection{Energy Efficiency}
Efficiency in energy usage is an essential metric for measuring system performance, particularly in the context of 5G and beyond, where user density constantly increases. It represents the total number of successfully transmitted data per unit of power consumed, typically measured in bits per joule (bits/J). The following expression will be used to measure energy efficiency in our research:
\begin{equation}
\label{eq:energy_efficiency}
\mathit{EE} = \frac{\sum\limits_{i=1}^{N} R_i}{P_{\text{total}}},
\end{equation}
where $R_i$ represents the achievable data rate for the $i^{th}$ user, which is a function of their allocated bandwidth $B_i$ and their experienced signal-to-interference-plus-noise ratio $\gamma_i$, and $P_{\text{total}}$ denotes the total power consumed by the system. High-throughput clustering techniques are compared for energy efficiency in wireless networks.

\begin{figure}[!ht]
\centering
    \begin{overpic}[scale=0.65]{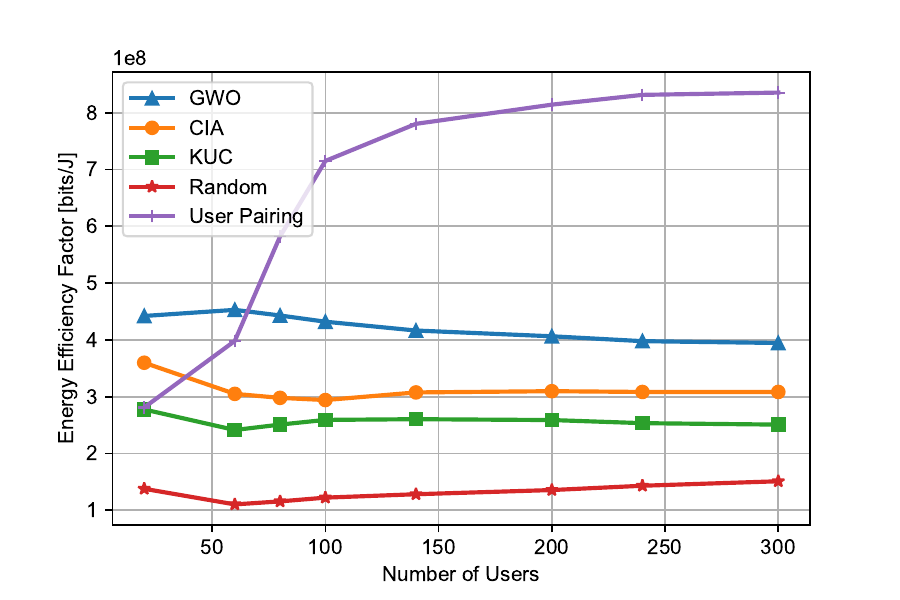}
    \end{overpic}
\caption{Energy efficiency while varying the number of users for a scenario of $5$ RF chains.}
\label{fig:energy_efficiency_users}
\end{figure}

Figure \ref{fig:energy_efficiency_users} illustrates the energy efficiency factor, measured in bits per joule, as the number of users increases while the number of RF chains remains fixed. User Pairing demonstrates the highest energy efficiency among algorithms and peaks at approximately $8 \times 10^8$ bits/J as the number of users increases. This behavior is somehow intuitive, as User Pairing serves only $2$ users per cluster and thus requires minimal power to distinguish signals while maintaining high energy efficiency.
GWO consistently shows strong energy efficiency, maintaining a value around $4 \times 10^8$ bits/J throughout the range of users. This reflects GWO's ability to balance power consumption and capacity. Its consistent performance demonstrates its ability to efficiently serve users without a significant increase in power consumption as the number of users grows.
CIA and KUC exhibit moderate energy efficiency, stabilizing around $3 \times 10^8$ bits/J and $2.5 \times 10^8$ bits/J, respectively, for $300$ active users. Although they group users effectively, their higher power requirements and clustering mechanism lead to a more modest energy efficiency than GWO and User Pairing.
Random allocation performs similarly, with energy efficiency factors near $1.5 \times 10^8$ bits/J, since it struggles with inefficient power usage due to its random nature.

\begin{figure}[!ht]
\centering
    \begin{overpic}[scale=0.65]{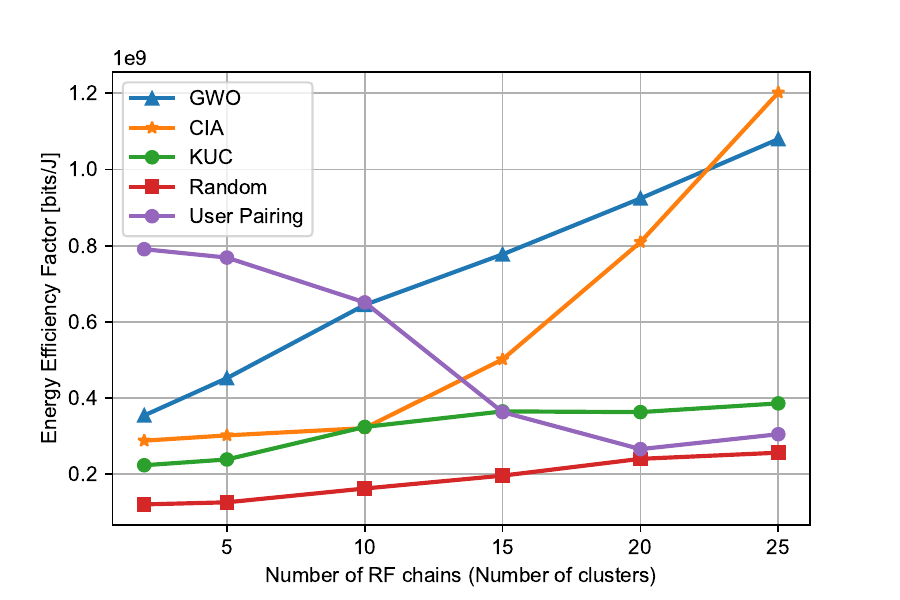}
    \end{overpic}
\caption{Energy efficiency while varying the number of RF chains for a scenario of $100$ active users.}
\label{fig:energy_efficiency_antennas}
\end{figure}

Figure \ref{fig:energy_efficiency_antennas} presents the energy efficiency factor as a function of the number of RF chains for $100$ users. The GWO algorithm demonstrates the highest energy efficiency, reaching approximately $1.1 \times 10^9$ bits/J when the number of clusters increases to $25$. This highlights its ability to maintain efficient power consumption as the number of clusters increases, making it a strong contender for scaling energy efficiency in larger systems.

CIA starts with low energy efficiency, but improves as the number of clusters increases, reaching a peak of around $1.2 \times 10^9$ bits/J when the number of RF chains reaches $25$. This is again due to its adaptivity to the spatial characteristics of the users, as illustrated in Figure~\ref{fig:served_power_n_antennas}.
The KUC maintains moderate energy efficiency, stabilizing at approximately $0.4 \times 10^9$ bits per joule as the number of clusters increases. This consistency indicates that KUC has a limited ability to optimize energy efficiency with additional clusters, reflecting a potential bottleneck in its scalability.
Random allocation displays relatively low energy efficiency, fluctuating around $0.15 \times 10^9$ bits/J and $0.25 \times 10^9$ bits/J, particularly in environments with increasing RF chain numbers. However, with few RF chains, the efficiency of User Pairing, which reaches $0.8 \times 10^9$ bits per joule, outperforms all proposed algorithms because it consumes the least amount of power, as shown in Figure \ref{fig:served_power_n_antennas}.

%\section{Result and discussion}

\section{Conclusion}
This study presented a detailed analysis of various user clustering and power allocation techniques within MIMO-NOMA systems, focusing on their applicability in advanced 5G and beyond networks. Integrating NOMA with massive MIMO technology is promising because it meets the stringent demands of modern wireless communications, particularly in scenarios requiring massive connectivity and high spectral efficiency.
The results of this research highlight the role of advanced clustering algorithms, such as GWO, CIA, and KUC, in optimizing the performance of MIMO-NOMA systems. In fact, GWO demonstrated exceptional adaptability and robustness in dynamic clustering scenarios, particularly in environments where spectral efficiency and user fairness are essential. On the other hand, the CIA was particularly effective in environments characterized by spatially correlated users. Meanwhile, the K-means++ algorithm, noted for its simplicity and computational efficiency, provides a practical solution for real-time applications, albeit with some limitations in environments with high user mobility and dynamic channel conditions.
An important insight from this research is the trade-off between computational complexity and system performance. Although sophisticated algorithms like GWO and CIA offer significant performance gains, particularly in terms of energy efficiency and system capacity, they also require greater computational resources. This complexity requires carefully considering the specific deployment scenario, balancing the need for advanced performance with the solution's computational feasibility.
Moreover, this study highlights the importance of integrating machine learning and metaheuristic optimization techniques in future 5G networks to balance system performance and computational efficiency. The research also points to the potential of these algorithms in addressing the challenges of massive connectivity and resource allocation in dense and dynamic network environments.

\bibliographystyle{elsarticle-num}
\bibliography{references}

\appendix

\section*{Appendix A: Power control}
\label{A1}

The SINR of user \( i \), denoted as \( \gamma_i \), is expressed as:

\begin{equation}
\left\{
\begin{array}{l}
    \gamma_i = \frac{C_{i} P_{i,k}|\mathbf{H}_{i,k} \mathbf{W}_{i,k}^{\text{H}}|^{2}}{\sigma^{2} + C_{i} |\mathbf{H}_{i,k} \mathbf{W}_{i,k}^{\text{H}}|^{2} P_{\text{intra}} + (1-C_{i})|\mathbf{H}_{i,k} \mathbf{W}_{i,k}^{\text{H}}|^{2} P_{\text{inter}}} \\
    \\
    P_{\text{intra}} = \sum_{j \in \zeta_k, j < i} P_{j,k}, \\
    \\
    P_{\text{inter}} = \sum_{\substack{l=i+1 \\ m \not \in \zeta_k}}^{N} P_{l,m},
\end{array}
\right.
\end{equation}

In this expression \( C_{i} \) represents the collinearity coefficient between the user \( i \) in the cluster \( k \) and the representative user of the cluster, \( p_{i,k} \) is the power coefficient allocated to the user \( i \) within the cluster \( k \), \( \mathbf{H}_{i,k} \) is the channel matrix and \( \mathbf{W}_{i,k} \) is the beamforming vector related to the user \( i \), while \( \sigma^{2} \) denotes the noise variance.

Let us consider the user \( n \),  the \( n \)-th user to decode its signal within the cluster \( \zeta_k \). The power allocated to the user \( n \), denoted by \( P_{n,k} \), must be carefully determined to ensure that their SINR \( \gamma_n \) meets or exceeds the predefined threshold \( \gamma_{\text{th}} \).
Rearranging the SINR inequality \( \gamma_n \geq \gamma_{\text{th}} \), we obtain the following expression for the required power allocation for the user \( n \):

\begin{equation}
\begin{aligned}
\label{eq:pn}
    P_{n,k} &\geq \frac{\gamma_{\text{th}} \sigma^{2}}{\mathbf{C}_{n} |\mathbf{H}_{n,k} \mathbf{W}_{n,k}^{\text{H}}|^{2}} + \gamma_{\text{th}} \sum_{j \in \zeta_k, j < n} P_{j,k} \\
    &\quad + \gamma_{\text{th}} \frac{(1-\mathbf{C}_{i})}{\mathbf{C}_{i}} \sum_{\substack{l=n+1 \\ m \not \in \zeta_k}}^{N} P_{l,m}.
\end{aligned}
\end{equation}

Here, the term \( \frac{\gamma_{\text{th}} \sigma^{2}}{\mathbf{C}_{n} |\mathbf{H}_{n,k} \mathbf{W}_{n,k}^{\text{H}}|^{2}} \) represents the minimum power required to overcome noise and ensure that the user's SINR exceeds the threshold \( \gamma_{\text{th}} \).
Meanwhile, the term \( \gamma_{\text{th}} \sum_{j \in \zeta_k, j < n} P_{j,k} \) accounts for the cumulative interference of users within the same cluster that is decoded before the user \( n \) and \( \gamma_{\text{th}} \frac{(1-\mathbf{C}_{n})}{\mathbf{C}_{n}} \sum_{l=n+1}^{N} P_{l,m} \) captures the interference from users outside the cluster.

The derived closed-form expression for \( P_{n,k} \) provides the required power allocation for each user while accounting for noise, intra-cluster, and inter-cluster interference. This expression is used in an iterative algorithm to allocate power dynamically and efficiently, ensuring that the total power constraint is respected while maximizing the number of users served.

%%%%%%%%%%%%%%%%%%

\subsubsection{ Constant \( \gamma_{\text{th}} \) for all users}

Consider that the SINR threshold \( \gamma_{\text{th}} \) is the same for all users. The power allocation for the \( n \)-th user is derived in Equation (\ref{eq:pn}). By sequentially substituting for each \( P_{n,k} \) starting from the first user in the cluster, the system iteratively considers the cumulative effects of interference and adjusts the power allocations accordingly. This results in the following expressions for each user:
\begin{equation*}
\begin{aligned}
    P_{1,k} &\geq \frac{\gamma_{\text{th}} \sigma^{2}}{\mathbf{C}_{1} |\mathbf{H}_{1,k} \mathbf{W}_{1,k}^{\text{H}}|^{2}} + \gamma_{\text{th}} \frac {(1-\mathbf{C}_{1})}{\mathbf{C}_{1}} \sum_{\substack{l=2 \\ m \not \in \zeta_k}}^{N} P_{l,m} , \\
    P_{2,k} &\geq \frac{\gamma_{\text{th}} \sigma^{2}}{\mathbf{C}_{2} |\mathbf{H}_{2,k} \mathbf{W}_{2,k}^{\text{H}}|^{2}} + \gamma_{\text{th}} P_{1,k}+ \gamma_{\text{th}} \frac {(1-\mathbf{C}_{2})}{\mathbf{C}_{2}} \sum_{\substack{l=3 \\ m \not \in \zeta_k}}^{N} P_{l,m} , \\
    &\vdots \\
    P_{n,k} &\geq \frac{\gamma_{\text{th}} \sigma^{2}}{\mathbf{C}_{n} |\mathbf{H}_{n,k} \mathbf{W}_{n,k}^{\text{H}}|^{2}} + \gamma_{\text{th}} \sum_{i=1}^{n-1} P_{i,k} \\
    &\quad + \gamma_{\text{th}} \frac{(1-\mathbf{C}_{n})}{\mathbf{C}_{n}} \sum_{\substack{l=n+1 \\ m \not \in \zeta_k}}^{N} P_{l,m} .
\end{aligned}
\end{equation*}
Each \( P_{i,k} \) can be recursively expanded based on the allocated power to previous users. For \( P_{i,k} \), we substitute  Equation (\ref{eq:pn}) into the equation of \( P_{n,k} \):
\begin{equation*}
\begin{aligned}
    P_{n,k} &\geq \frac{\gamma_{\text{th}} \sigma^{2}}{\mathbf{C}_{n} |\mathbf{H}_{n,k} \mathbf{W}_{n,k}^{\text{H}}|^{2}} \\
    &\quad + \gamma_{\text{th}} \sum_{i=1}^{n-1} \left( \frac{\gamma_{\text{th}} \sigma^{2}}{\mathbf{C}_{i} |\mathbf{H}_{i,k} \mathbf{W}_{i,k}^{\text{H}}|^{2}} + \gamma_{\text{th}} \sum_{j=1}^{i-1} P_{j,k} \right. \\
    &\quad \left. + \gamma_{\text{th}} \frac{(1-\mathbf{C}_{i})}{\mathbf{C}_{i}} \sum_{\substack{l=n+1 \\ m \not \in \zeta_k}}^{N} P_{l,m} \right) \\
    &\quad + \gamma_{\text{th}} \frac{(1-\mathbf{C}_{n})}{\mathbf{C}_{n}} \sum_{\substack{l=n+1 \\ m \not \in \zeta_k}}^{N} P_{l,m}.
\end{aligned}
\end{equation*}

Then, we expand the summation and combine terms and obtain:

\begin{equation*}
\begin{aligned}
    P_{n,k} &\geq \frac{\gamma_{\text{th}} \sigma^{2}}{\mathbf{C}_{n} |\mathbf{H}_{n,k} \mathbf{W}_{n,k}^{\text{H}}|^{2}} + \gamma_{\text{th}}^2 \sum_{i=1}^{n-1} \frac{\sigma^{2}}{\mathbf{C}_{i} |\mathbf{H}_{i,k} \mathbf{W}_{i,k}^{\text{H}}|^{2}} \\
    &\quad + \gamma_{\text{th}}^2 \sum_{i=1}^{n-1} \sum_{j=1}^{i-1} P_{j,k} + \gamma_{\text{th}} \frac{(1-\mathbf{C}_{n})}{\mathbf{C}_{n}} \sum_{\substack{l=n+1 \\ m \not \in \zeta_k}}^{N} P_{l,m} \\
    &\quad + \gamma_{\text{th}}^2 \sum_{i=1}^{n-1} \frac{(1-\mathbf{C}_{i})}{\mathbf{C}_{i}} \sum_{\substack{l=n+1 \\ m \not \in \zeta_k}}^{N} P_{l,m}.
\end{aligned}
\end{equation*}

Notice that the double summation term \( \sum_{i=1}^{n-1} \sum_{j=1}^{i-1} P_{j,k} \) can be simplified since it represents a cumulative sum that involves powers of \( \gamma_{\text{th}} \). We can then simplify the expression as:
\begin{equation*}
\begin{aligned}
    P_{n,k} &\geq \frac{\gamma_{\text{th}} \sigma^{2}}{\mathbf{C}_{n} |\mathbf{H}_{n,k} \mathbf{W}_{n,k}^{\text{H}}|^{2}} + \gamma_{\text{th}}^2 \sigma^{2} \sum_{i=1}^{n-1} \frac{(1+\gamma_{\text{th}})^{n-i-1}}{\mathbf{C}_{i} |\mathbf{H}_{i,k} \mathbf{W}_{i,k}^{\text{H}}|^{2}} \\
    &\quad + \gamma_{\text{th}} \frac{(1-\mathbf{C}_{n})}{\mathbf{C}_{n}} \sum_{\substack{l=n+1 \\ m \not \in \zeta_k}}^{N} P_{l,m} \\
    &\quad + \gamma_{\text{th}}^2 \sum_{i=1}^{n-1} (1+\gamma_{\text{th}})^{n-i-1} \frac{(1-\mathbf{C}_{i})}{\mathbf{C}_{i}} \sum_{\substack{l=n+1 \\ m \not \in \zeta_k}}^{N} P_{l,m}.
\end{aligned}
\end{equation*}

This equation reflects the cumulative effect of previous power allocations, noise, and interference from other clusters, all modulated by the SINR threshold \( \gamma_{\text{th}} \) and the channel conditions represented by the matrices \( \mathbf{C}_{i} \) and \( \mathbf{H}_{i,k} \).
This iterative structure of the equations reveals how the power allocation for each user is influenced by the power levels allocated to previously decoded users and the interference from other clusters. It underscores the importance of a systematic optimization process to balance the power distribution and ensure that all users meet the quality of service requirements.

Finally, we impose a dual constraint to ensure that the total allocated power to all $n_k$ users within a cluster $k$, denoted as $A_{i} = \sum_{i=1}^{n_{k}} P_{i,k}$, does not exceed a maximum threshold $ P_{\text{Max}}$, taking into account the cumulative interference from other clusters represented by $A_{m \neq k_i}= \sum_{\substack{m=1 \\ m \neq k_i}}^{M} P_{i,m} $:

\begin{equation}
  \begin{aligned}
    & \left\{
    \begin{aligned}
      & A_{i} \geq \alpha_{i} + \beta_{i} A_{m \neq k_i}, \\
      & \sum_{k=1}^{M} A_{k} \leq P_{\text{Max}},
    \end{aligned}
    \right.
  \end{aligned}
\end{equation}

Where:

\begin{align}
\alpha_{i} &= \sum_{i=1}^{n_{k}} \frac{\gamma_{\text{th}} \sigma^{2}}{\mathbf{C}_{i} |\mathbf{H}_{i,k} \mathbf{W}_{i,k}^{\text{H}}|^{2}} \\
&\quad + \gamma_{\text{th}}^{2} \sigma^{2} \sum_{i=1}^{n_{i}} \sum_{j=1}^{i-1} \frac{(1+\gamma_{\text{th}})^{i-j-1}}{\mathbf{C}_{j} |\mathbf{H}_{j,k} \mathbf{W}_{j,k}^{\text{H}}|^{2}}, \\
\beta_{i} &= \sum_{i=1}^{n_{k}} \gamma_{\text{th}} \frac{(1-\mathbf{C}_{i})}{\mathbf{C}_{i}} \sum_{\substack{l=n+1 \\ m \not \in \zeta_k}}^{N} P_{l,m} \\
&\quad  + \gamma_{\text{th}}^{2} \sum_{i=1}^{n_{i}} \sum_{j=1}^{i-1} (1+\gamma_{\text{th}})^{i-j-1} \frac{(1-\mathbf{C}_{j})}{\mathbf{C}_{j}} \sum_{\substack{l=n+1 \\ m \not \in \zeta_k}}^{N} P_{l,m}.
\end{align}
where \( \alpha_{i} \) represents the minimum power requirements to satisfy the SINR threshold \( \gamma_{\text{th},i} \), considering the cumulative effect of noise and intra-cluster interference, while \( \beta_{i} \) captures the inter-cluster interference effects, accounting for the interference from other clusters and how the power allocation must adjust for users outside the cluster. These constraints ensure that power is optimally distributed within and across clusters while balancing satisfying individual QoS requirements and adhering to the overall power budget.

To simplify the power allocation constraints, we proceed with some mathematical simplifications in the first inequality, which represents the minimum power required to satisfy the cumulative needs of all clusters, by adding the term $\beta_i A_i $ on both sides. 
\begin{equation}
  \begin{aligned}
    & \left\{
    \begin{aligned}
      & (1+\beta_{i}) A_{i} \geq \alpha_{i}+ \beta_{i}  A_{m \neq K_i} + \beta{i} A_i, \\
      & \sum_{k=1}^{M} A_{k} \leq P_{\text{Max}},
    \end{aligned}
    \right.
    \end{aligned}
\end{equation}

The overall power allocation strategy integrates the constraints of all cluster $k \in {1,2,.., M}$ into an aggregated power constraint, which can be expressed as follows:

\begin{equation} 
\label{eq: system}
      \begin{aligned}
    & \left\{
    \begin{aligned}
      & \sum_{k=1}^{M} A_{k} \geq \frac{ \sum_{k=1}^{M} \frac{ \alpha_{k}}{(1+\beta_{k})}}{1-\sum_{k=1}^{M} \frac{\beta_{k}}{(1+\beta_{k})}}, \\
      & \sum_{k=1}^{M} A_{k} \leq P_{\text{Max}},
    \end{aligned}
    \right.
  \end{aligned}
\end{equation}

In this formulation, the first inequality represents the minimum power required to satisfy the cumulative requirements of all clusters, adjusted by the factors \( \alpha_{i} \) and \( \beta_{i} \) accounting for weaker users within the same cluster and the interference between clusters, respectively. The second inequality ensures that the total power allocated to all users does not exceed the network's maximum power limit \( P_{\text{Max}} \).

By synthesizing these constraints, the final equation encapsulates the dual objective of the optimization process: allocating sufficient power to meet or exceed the adjusted power requirements of all clusters while ensuring that the total network power remains within the allowed maximum.

\begin{equation}
\label{eq: inegality}
    P_{\text{Max}} \geq \sum_{k=1}^{M} A_{k} \geq P_{\text{min}},
\end{equation}
where \( P_{\text{min}} \) represents the minimum power required to meet the SINR thresholds and QoS requirements across all clusters.
When power is allocated, it is essential to find the balance between power distribution and network capacity to meet the QoS constraints of all users without overloading the network. This balanced approach allows the system to handle many connections and high data rates, meeting the needs of modern wireless networks.

%%%%%%%%%%%%%%%%

\subsubsection{Different  \( \gamma_{\text{th}} \) for each user}
Considering different SINR thresholds \( \gamma_{\text{th}} \)  for each user introduces a more realistic setting, reflecting diverse QoS requirements across users. The power allocation expression for the \( n \)-th user is derived as follows:

\begin{equation}
\begin{aligned}
\label{pn_variable_gamma}
    P_{n,k} &\geq \frac{\gamma_{\text{th},n} \sigma^{2}}{\mathbf{C}_{n} |\mathbf{H}_{n,k} \mathbf{W}_{n,k}^{\text{H}}|^{2}} + \gamma_{\text{th},n} \sum_{i \in \zeta_{k}, i < n} P_{i,k} \\
    &\quad + \gamma_{\text{th},n} \frac{(1-\mathbf{C}_{n})}{\mathbf{C}_{n}} \sum_{\substack{l=n+1 \\ m \not \in \zeta_k}}^{N} P_{l,m}.
\end{aligned}
\end{equation}

By sequentially substituting for each \( P_{n,k} \) starting from the first user in the cluster, the system iteratively accounts for the cumulative effects of interference and adjusts the power allocations accordingly. This results in the following expressions for each user:

\begin{equation*}
\begin{aligned}
    P_{1,k} &\geq \frac{\gamma_{\text{th},1} \sigma^{2}}{\mathbf{C}_{1} |\mathbf{H}_{1,k} \mathbf{W}_{1,k}^{\text{H}}|^{2}} + \gamma_{\text{th},1} \frac {(1-\mathbf{C}_{1})}{\mathbf{C}_{1}} \sum_{\substack{l=2 \\ m \not \in \zeta_k}}^{N} P_{l,m}, \\
    P_{2,k} &\geq \frac{\gamma_{\text{th},2} \sigma^{2}}{\mathbf{C}_{2} |\mathbf{H}_{2,k} \mathbf{W}_{2,k}^{\text{H}}|^{2}} + \gamma_{\text{th},2} P_{1,k} + \gamma_{\text{th},2} \frac {(1-\mathbf{C}_{2})}{\mathbf{C}_{2}} \sum_{\substack{l=3 \\ m \not \in \zeta_k}}^{N} P_{l,m}, \\
    &\vdots \\
    P_{n,k} &\geq \frac{\gamma_{\text{th},n} \sigma^{2}}{\mathbf{C}_{n} |\mathbf{H}_{n,k} \mathbf{W}_{n,k}^{\text{H}}|^{2}} + \gamma_{\text{th},n} \sum_{i=1}^{n-1} P_{i,k} \\
    &\quad + \gamma_{\text{th},n} \frac{(1-\mathbf{C}_{n})}{\mathbf{C}_{n}} \sum_{\substack{l=n+1 \\ m \not \in \zeta_k}}^{N} P_{l,m}.
\end{aligned}
\end{equation*}

By extracting \( P_{i,k} \)  based on Equation (\ref{pn_variable_gamma})
and substituting it recursively based on the power assigned to previous users into the equation for \( P_{n,k} \), we obtain:

\begin{equation*}
\begin{aligned}
    P_{n,k} &\geq \frac{\gamma_{\text{th},n} \sigma^{2}}{\mathbf{C}_{n} |\mathbf{H}_{n,k} \mathbf{W}_{n,k}^{\text{H}}|^{2}} \\
    &\quad + \gamma_{\text{th},n} \sum_{i=1}^{n-1} \left( \frac{\gamma_{\text{th},i} \sigma^{2}}{\mathbf{C}_{i} |\mathbf{H}_{i,k} \mathbf{W}_{i,k}^{\text{H}}|^{2}} + \gamma_{\text{th},i} \sum_{j=1}^{i-1} P_{j,k} \right. \\
    &\quad \left. + \gamma_{\text{th},i} \frac{(1-\mathbf{C}_{i})}{\mathbf{C}_{i}} \sum_{\substack{l=n+1\\ m \not \in \zeta_k}}^{N} P_{l,m} \right) \\
    &\quad + \gamma_{\text{th},n} \frac{(1-\mathbf{C}_{n})}{\mathbf{C}_{n}} \sum_{\substack{l=n+1\\ m \not \in \zeta_k}}^{N} P_{l,m}.
\end{aligned}
\end{equation*}
and by combining the terms, we obtain:
\begin{equation*}
\begin{aligned}
    P_{n,k} &\geq \frac{\gamma_{\text{th},n} \sigma^2}{\mathbf{C}_n |\mathbf{H}_{n,k} \mathbf{W}_{n,k}^{\text{H}}|^2} \\
    &\quad + \sum_{i=1}^{n-1} \gamma_{\text{th},n} \gamma_{\text{th},i} \frac{\sigma^2}{\mathbf{C}_i |\mathbf{H}_{i,k} \mathbf{W}_{i,k}^{\text{H}}|^2} \\
    &\quad + \sum_{i=1}^{n-1} \sum_{j=1}^{i-1} \gamma_{\text{th},n} \gamma_{\text{th},i} \gamma_{\text{th},j} \frac{\sigma^2}{\mathbf{C}_j |\mathbf{H}_{j,k} \mathbf{W}_{j,k}^{\text{H}}|^2} \\
    &\quad + \gamma_{\text{th},n} \frac{(1 - \mathbf{C}_n)}{\mathbf{C}_n} \sum_{\substack{l=n+1\\ m \not \in \zeta_k}}^{N} P_{l,m} \\
    &\quad + \sum_{i=1}^{n-1} \gamma_{\text{th},n} \gamma_{\text{th},i} \frac{(1 - \mathbf{C}_i)}{\mathbf{C}_i} \sum_{\substack{l=n+1\\ m \not \in \zeta_k}}^{N} P_{l,m} \\
    &\quad + \sum_{i=1}^{n-1} \sum_{j=1}^{i-1} \gamma_{\text{th},n} \gamma_{\text{th},i} \gamma_{\text{th},j} \frac{(1 - \mathbf{C}_j)}{\mathbf{C}_j} \sum_{\substack{l=n+1\\ m \not \in \zeta_k}}^{N} P_{l,m}.
\end{aligned}
\end{equation*}

This formulation reflects the cumulative effect of previous power allocations, noise, and interference from other clusters, all modulated by user-specific SINR thresholds \( \gamma_{\text{th},n} \) and the channel conditions represented by the matrices \( \mathbf{C}_n \) and \( \mathbf{H}_{n,k} \).

The iterative structure of these equations shows how the power allocation for each user is influenced by the power levels allocated to previously decoded users and the interference from other clusters. It highlights the importance of a systematic optimization process to balance the power distribution and ensure that all users meet their respective quality of service requirements.

\begin{align}
\alpha_i &= \sum_{i=1}^{n_k} \left( \frac{\gamma_{\text{th},i} \sigma^2}{C_i |\mathbf{H}_{i,k} \mathbf{W}_{i,k}^{\text{H}}|^2} \right) 
+ \sum_{i=1}^{n_k} \sum_{j=1}^{i-1} \gamma_{\text{th},i} \gamma_{\text{th},j} \frac{\sigma^2}{C_j |\mathbf{H}_{j,k} \mathbf{W}_{j,k}^{\text{H}}|^2} \\
&\quad + \sum_{i=1}^{n_k} \sum_{j=1}^{i-1} \sum_{l=1}^{j-1} \gamma_{\text{th},i} \gamma_{\text{th},j} \gamma_{\text{th},l} \frac{\sigma^2}{\mathbf{C}_l |\mathbf{H}_{l,k} \mathbf{W}_{l,k}^{\text{H}}|^2} \\
\beta_i &= \sum_{i=1}^{n_k} \gamma_{\text{th},i} \frac{(1 - C_i)}{C_i} \sum_{\substack{l=n+1\\ m \not \in \zeta_k}}^{N} P_{l,m} \\
&\quad + \sum_{i=1}^{n_k} \sum_{j=1}^{i-1} \gamma_{\text{th},i} \gamma_{\text{th},j} \frac{(1 - C_j)}{C_j} \sum_{\substack{l=n+1\\ m \not \in \zeta_k}}^{N} P_{l,m} \\
&\quad + \sum_{i=1}^{n_k} \sum_{j=1}^{i-1}  \sum_{l=1}^{j-1}\gamma_{\text{th},i} \gamma_{\text{th},j} \gamma_{\text{th},l} \frac{(1-\mathbf{C}_l)}{\mathbf{C}_l}|\mathbf{H}_{l,k} \sum_{\substack{l=n+1\\ m \not \in \zeta_k}}^{N} P_{l,m}.
\end{align}

The closed-form expression for \( P_{\text{min}} \) is then expressed as follows:
\begin{equation}
P_{\text{min}} = \frac{\sum_{k=1}^{M} \left( \frac{\alpha_k}{1 + \beta_k} \right)}{1 - \sum_{k=1}^{M} \left( \frac{\beta_k}{1 + \beta_k} \right)}.
\end{equation}

\end{document}